\pgfplotsset{compat=1.10}
\newtheorem{corollary}{Corollary}
\newtheorem{lemma}{Lemma}
\newtheorem{remark}{Remark}
\theoremstyle{definition}
\begin{document}
\title{Optimization of the Downlink Spectral- and Energy-Efficiency of  RIS-aided Multi-user \\URLLC MIMO Systems
}
\author{Mohammad Soleymani, \emph{Member, IEEE},  
Ignacio Santamaria, \emph{Senior Member, IEEE}, 
\\Eduard Jorswieck, \emph{Fellow, IEEE},
Robert Schober, \emph{Fellow, IEEE}, \\and
Lajos Hanzo, \emph{Life Fellow, IEEE}
 \\ \thanks{ 
Mohammad Soleymani is with the Signal and System Theory Group, Universit\"at Paderborn, Germany (email: \protect\url{mohammad.soleymani@uni-paderborn.de}).  

Ignacio Santamaria is with the Department of Communications Engineering, Universidad de Cantabria, Spain (email: \protect\url{i.santamaria@unican.es}).

Eduard Jorswieck is with the Institute for Communications Technology, Technische Universit\"at Braunschweig, 38106 Braunschweig, Germany
(email: \protect\url{jorswieck@ifn.ing.tu-bs.de}).

Robert Schober is with the Institute for Digital Communications, Friedrich Alexander University of Erlangen-Nuremberg, Erlangen 91058, Germany (email: \protect\url{robert.schober@fau.de}).

Lajos Hanzo is with the Department of Electronics and Computer Science, University of Southampton, Southampton, United Kingdom (email:
\protect\url{lh@ecs.soton.ac.uk}).

The work of Ignacio Santamaria was funded by MCIN/ AEI /10.13039/501100011033, under Grant PID2022-137099NB-C43 (MADDIE), and by European Union's (EU's) Horizon Europe project 6G-SENSES under Grant 101139282. The work of Eduard Jorswieck was supported by the Federal Ministry of Education and Research (BMBF, Germany) through the Program of Souver\"an. Digital. Vernetzt. joint Project 6G-RIC, under Grant 16KISK031, and by European Union's (EU's) Horizon Europe project 6G-SENSES under Grant 101139282. Lajos Hanzo would like to acknowledge the financial support of the Engineering and Physical Sciences Research Council (EPSRC) projects under grant EP/Y037243/1, EP/W016605/1, EP/X01228X/1, EP/Y026721/1,
EP/W032635/1 and EP/X04047X/1 as well as of the European Research Council’s Advanced Fellow Grant QuantCom (Grant No. 789028). The work of Robert Schober was partly supported by the Federal Ministry of Education and Research of Germany under the programme of ``Souveran. Digital. Vernetzt.'' joint project 6G-RIC (project identification number: PIN 16KISK023) and by the Deutsche Forschungsgemeinschaft (DFG, German Research Foundation) under grant SCHO 831/15-1.
}}
\maketitle
\begin{abstract}
Modern wireless communication systems are expected to provide improved latency and reliability. To meet these expectations, a short packet length is needed, which makes the first-order Shannon rate an inaccurate performance metric for such communication systems. A more accurate approximation of the achievable rates of  finite-block-length (FBL) coding regimes is known as the normal approximation (NA). It is therefore of substantial interest to study the optimization of the FBL rate in multi-user multiple-input multiple-output (MIMO) systems, in which each user may transmit and/or receive multiple data streams.  Hence, we formulate a general optimization problem for improving the spectral and energy efficiency of multi-user MIMO-aided ultra-reliable low-latency communication (URLLC) systems, which are assisted by reconfigurable intelligent surfaces (RISs). We show that an RIS is capable of substantially improving the performance of multi-user MIMO-aided URLLC systems. Moreover, the benefits of RIS increase as the packet length and/or the tolerable bit error rate are reduced. This reveals that RISs can be even more beneficial in URLLC systems for improving the FBL rates than in conventional systems approaching Shannon rates.
\end{abstract} 
\begin{IEEEkeywords}
Energy efficiency,
 MIMO broadcast channels,  reconfigurable intelligent surface, spectral efficiency, ultra-reliable low-latency communication.
\end{IEEEkeywords}
\section{Introducton}\label{1}
The sixth generation (6G) of wireless communication systems is expected to significantly improve the spectral efficiency (SE), energy efficiency (EE), and reliability of the existing systems, despite of providing a lower latency than 5G \cite{wang2023road, gong2022holographic}. Thus, 6G should employ radical new technologies such as reconfigurable intelligent surface (RIS) to meet these expectations. Moreover, 6G has to support a large variety of applications, which require ultra-reliable and low-latency communications (URLLC) \cite{wang2023road, gong2022holographic}.
To attain low latency, realistic finite block length (FBL) codes have to be employed. In this content, the classic Shannon rate is an inaccurate performance metric for URLLC systems. Indeed, the FBL rate is more challenging to optimize than the Shannon rate, especially in  multiple-input multiple-output (MIMO) systems, when multi-stream communication is targeted. 
In fact, to the best of our knowledge,  resource allocation has not been designed for multi-user MIMO (MU-MIMO) systems relying on FBL coding in the open literature for the scenario of multiple streams per user.
Developing suitable resource allocation schemes is even more challenging in RIS-assisted systems since this requires the joint optimization of  the transmit covariance matrices and the channels, which depend on the RIS elements.
To close this knowledge gap,  we derive a closed-form expression for the rates of users in MU-MIMO systems using realistic FBL coding when multiple streams are allowed. Then, we develop an optimization framework for MU-MIMO RIS-assisted URLLC systems and show that an RIS can substantially improve the SE and EE. Our results show that an RIS can be even more beneficial in MIMO URLLC systems than in systems approaching the classic Shannon rate, since an RIS provides higher gains for short  packets and/or for low tolerable bit error rates (BERs).

\subsection{Literature Review}\label{sec-i-a}
A main goal of 6G is to drastically enhance the SE and EE, which are even more vital for applications related to FBL coding. To realize this goal, 6G has to employ powerful emerging technologies such as RISs as well as existing  MIMO solutions \cite{wu2021intelligent, di2020smart}. An RIS was shown to be able to substantially enhance the EE and SE \cite{huang2019reconfigurable, wu2019intelligent, kammoun2020asymptotic,  soleymani2022noma,  santamaria2023interference, soleymani2023maximization}, when studying different performance metrics such as the sum rate, minimum rate, total power consumption required for achieving a specific quality of service (QoS), minimum signal-to-interference-plus-noise ratio (SINR), global EE (GEE), and interference leakage. 
For instance in \cite{huang2019reconfigurable}, the authors proposed algorithms for increasing the GEE and sum rate of a multiple-input single-output (MISO) broadcast channel (BC). In \cite{wu2019intelligent}, it was shown that an RIS reduces the power consumption of the single-cell MISO BC when a minimum SINR per user has to be ensured. Moreover, the authors of \cite{kammoun2020asymptotic} showed that an RIS is capable of increasing the minimum SINR of a single-cell MISO BC for a given power budget. The benefit of RIS in terms of both the minimum rate and the minimum EE was studied in \cite{soleymani2022noma} for a multi-cell MISO BC employing non-orthogonal multiple access (NOMA). In a multi-cell BC, each user is associated with only a single base station (BS) at a time. In \cite{xu2023algorithm, yao2023robust, lyu2023energy}, it was shown that RISs can also improve the performance of cell-free systems in which each user simultaneously communicates with several access points, rather than a single BS.

The  performance of RIS relying on the Shannon capacity achieving codes  has been studied in \cite{huang2019reconfigurable, wu2019intelligent, kammoun2020asymptotic,  soleymani2022noma,  santamaria2023interference}, but naturally, an RIS can also be beneficial in multi-user systems using FBL coding, as shown in \cite{soleymani2023spectral, soleymani2023optimization, li2021aerial , vu2022intelligent, xie2021user,     almekhlafi2021joint}. For instance, in \cite{soleymani2023spectral},  resource allocation schemes were developed for MISO URLLC systems, assisted by simultaneously transmit and reflect (STAR-) RIS, and it was shown that an RIS (either STAR or purely reflective) substantially improves the SE and EE of URLLC systems. In \cite{soleymani2023optimization}, it was demonstrated that RIS and rate splitting can be mutually beneficial tools of enhancing the EE and SE performance of interference-limited URLLC systems. In \cite{vu2022intelligent}, the advantage of employing an RIS and NOMA in a two-user single-input single-output (SISO) URLLC BC was shown.  

The papers \cite{soleymani2023spectral, soleymani2023optimization, li2021aerial , vu2022intelligent, xie2021user, almekhlafi2021joint} studied multi-user RIS-assisted URLLC systems, but only supported single-stream data transmission per user. However, RIS can also improve the system performance when parallel frequency-domain channels are employed along with FBL coding \cite{ghanem2022optimal}. Nevertheless, resource allocation has not been designed for multi-user MIMO URLLC systems supporting multiple streams per user in the open literature. This is particularly challenging for RIS-assisted systems. 
Indeed, only a few treatises exist in multiple-stream data transmission in MIMO systems, which mainly studied a single-user scenario without considering RISs \cite{ makki2018finite, makki2016required,  li2022ultra}. Thus, multi-user MIMO systems both with and without RISs require further investigations. MIMO systems support multiple-stream data transmissions per user, which exploit the spectrum  efficiently and improve the EE at a specific QoS. Below, we briefly describe the challenges of optimizing the FBL rates when multiple-stream data transmission is supported. 

In the FBL regime, the achievable rate depends not only on the Shannon rate, $C$, but also on the channel's dispersion, $V$, the packet length, $n_t$, and the tolerable bit error rate $\epsilon$. An accurate approximation for the achievable rate of FBL coding in parallel channels is the normal approximation (NA)\footnote{Note that the NA may not be accurate when the packet length is extremely short, and/or the tolerable bit error rate is extremely low. For further discussions regarding the accuracy of the NA, please refer to \cite{polyanskiy2010channel, erseghe2016coding, erseghe2015evaluation}.}, which is given by \cite[Theorem 78]{polyanskiy2010}
\begin{equation}
r=\sum_{i=1}^IC_i-Q^{-1}(\epsilon)\sqrt{\frac{\sum_{i=1}^IV_i}{n_t}},
\end{equation}
where $Q^{-1}$ is the inverse of the well-known $Q$ function of Gaussian distributions, $I$ is the number of parallel channels, $C_i$ and $V_i$ are, respectively, the Shannon rate and the channel dispersion of the $i$-th parallel channel. Note that the channel dispersion and Shannon rate of parallel channels are, respectively, a summation of the channel dispersions and Shannon rates of all individual channels, i.e., $C=\sum_{i=1}^IC_i$ and $\sum_{i=1}^IV_i$.   {The Shannon rate of MIMO systems can  also be represented in a closed-form matrix format, and there are already existing contributions on optimizing the Shannon rates in MIMO RIS-assisted systems \cite{soleymani2022improper, soleymani2023rate}. However, the achievable channel  dispersion term for Gaussian signals has a fractional structure, which is more challenging to optimize, and its closed-form matrix format has not been derived in the related works.  
Hence, resource allocation for parallel channels relying on FBL coding can be much more complicated than for single-stream channels. Moreover, the channel's dispersion term makes it impossible to reuse the existing solutions for MIMO RIS-assisted systems, when FBL coding is employed.} In the next subsection, we provide a critical review of the existing works on RIS-assisted URLLC systems and discuss the open topics that merit further investigations. 

\begin{table}
\centering
\scriptsize
\caption{Overview of most closely related works on RIS-assisted URLLC systems.}\label{table-1}
\begin{tabular}{|c|c|c|c|c|c|c|c|c|c|c|c|c|c|c|}
 \hline	
 &This paper&\cite{soleymani2023optimization}&\cite{abughalwa2022finite}&\cite{li2021aerial,  vu2022intelligent, xie2021user,     almekhlafi2021joint, ghanem2022optimal, ghanem2021joint}&\cite{ ren2021intelligent,   zhang2021irs}\\
 \hline
Multi-user&  $\surd$&$\surd$&  $\surd$&$\surd$&-
\\
 \hline
 Ch. disp. in \cite{scarlett2016dispersion}&  $\surd$&$\surd$&$\surd$&-&-   \\
\hline
EE&  $\surd$&$\surd$&-&-&-
\\
\hline
MIMO&
$\surd$&-&-&-&-
\\
\hline
Multiple streams&
$\surd$&-&-&-&-
\\
\hline
STAR-RIS&$\surd$&-&-&-&-
\\
\hline
 {Transmission delay}&$\surd$&-&-&-&-
\\
\hline
		\end{tabular}
\normalsize
\end{table}

\subsection{Motivation}
The most closely related treatises on RIS-assisted URLLC system designs are compared in Table \ref{table-1}, based on the system model, network scenario, performance metrics, and the channel dispersion encountered in multi-user systems. As shown in the table,  most of the studies on FBL transmission in RIS-assisted systems have focused on SISO/MISO systems, when only a single-stream data transmission per user is allowed.  Additionally, there is a limited number of contributions on EE in RIS-assisted URLLC systems and EE metrics have not been studied in multi-user MIMO systems using FBL coding. Note that in URLLC systems, the EE can be even more vital, since in some applications, it might be impossible to replace the battery of users/nodes, and consequently, the network must be as energy efficient as possible.

Moreover,  there is no work on multi-user MIMO systems with FBL considering the achievable channel dispersion term in \cite{scarlett2016dispersion}, even for systems without RIS. It should be emphasized that Gaussian signaling cannot achieve the optimal channel dispersion in the presence of interference in multi-user systems. Hence, when employing Gaussian signaling, it is more accurate to consider  the suboptimal channel dispersion term in \cite{scarlett2016dispersion} instead of the optimal one since  the channel dispersion in \cite{scarlett2016dispersion} can be attained through Gaussian signals  in the presence of both interference and white additive Gaussian noise in multi-user systems.

To sum up, resource allocation schemes should be developed for multi-user  RIS-assisted URLLC  MIMO systems with more emphasis on EE and by allowing multiple-data-stream transmissions per user and considering the achievable channel dispersion term for Gaussian signals. Thus, a general optimization framework for multi-user RIS-assisted MIMO systems with FBL coding can facilitate future studies in this field.

\subsection{Contribution}
We propose a general optimization framework  for maximizing the SE and EE of multi-user MIMO RIS-aided URLLC systems. To the best of our knowledge, this is the first paper on the resource allocation of MU-MIMO RIS-aided URLLC systems supporting multiple stream per users.   {To develop a general framework, we first formulate a closed-form expression for the channel dispersion of MIMO systems in the presence of interference, based on both the optimal  channel dispersion as well as on the  channel dispersion term in  \cite{scarlett2016dispersion}.} Then, we propose specific schemes for optimizing the FBL rate expressions by employing majorization minimization (MM), alternating optimization (AO), and fractional programming (FP) tools such as Dinkelbach-based algorithms. As indicated in Section \ref{sec-i-a}, due to the channel dispersion term, which has a fractional structure, the FBL rates are much more challenging to optimize than the classic Shannon rates. Moreover, it is impossible to adapt the established works on MIMO RIS-assisted systems with Shannon rates to the systems with FBL coding, and to evaluate how the reliability and latency constraints influence the effectiveness of RISs.  Thus, the main novelty of this treatise is the derivation of closed-form expressions for the channel dispersion, followed by the  development of algorithms to optimize over the FBL rates, including the dispersion term.

To elaborate, our optimization framework is flexible and can be utilized in a wide range of MU-MIMO URLLC systems aided by RISs. Additionally, the framework may be used for solving a broad spectrum of optimization problems for which the objective function and/or constraints can be, but are not limited to, linear functions of the rates and/or EE of users. The convergence of our framework is ensured towards a stationary point for the general optimization problem, when the feasibility set for the RIS elements conforms to a convex set. We consider a multi-cell MIMO BC as an example of the networks that our framework can be applied to.  {Furthermore, we consider both the EE and SE metrics as well as the transmission delay for investigating the performance of RIS in MU-MIMO URLLC systems.} For the SE metric, we consider the sum rate and the minimum rate of the users, which are among the  most common performance metrics for SE.  Moreover, we evaluate the EE by optimizing the GEE and the minimum EE of users. Note that the sum rate and global EE are pivotal overall system performance metrics. By contrast, the minimum rate and the EE of specific users consider the individual performance of the users and can provide reasonable rate/EE-fairness among the users since typically all the users are allocated similar rate/EE when the minimum rate/EE  is maximized. Thus, considering all these metrics can provide a complete picture of the performance of MU-MIMO URLLC systems aided by RISs. Moreover, we make realistic assumptions regarding the channel models and the feasible sets of the RIS coefficients for appropriately examining the RIS performance.

In addition to passive and reflective RISs, we also consider simultaneously transmitting and reflecting (STAR) RIS, which provides a full $360^\circ$ coverage. Moreover, we show that RISs can significantly enhance the EE and SE of a multi-user MIMO URLLC BC. Notably, the advantages of RISs escalate with shorter packet lengths and/or more stringent reliability constraints. This implies that the benefits of RIS can be higher in MU-MIMO URLLC systems. However, it should be noted that the performance of the system may become degraded, if the RIS elements are inaccurately optimized. 

\subsection{Organization and Notations}
The structure of the paper is outlined in the following. Section \ref{sec=ii} describes our network scenario, RIS model, and signal model as well as the rate and EE expressions. Moreover, in Section \ref{sec=ii}, we formulate the  optimization problem considered. Section \ref{sec-iii} presents the optimization framework proposed. Section \ref{sec-iv} presents our numerical results. Finally, Section \ref{sec-v} concludes the paper. 

The trace and determinant of the matrix ${\bf X}$ are, respectively, denoted as $\text{Tr}({\bf X})$ and $|{\bf X}|$. We represent the conjugate of complex variable/vector/matrix $x/{\bf x}/{\bf X}$ as $x^*/{\bf x}^*/{\bf X}^*$. The mathematical expectation is denoted as $\mathbb{E}\{\cdot\}$. The identity matrix is represented as ${\bf I}$. Moreover, $\mathcal{O}[\cdot]$ is the big-O notation for representing the computational complexity of algorithms.
\section{System model}\label{sec=ii}
 Our proposed framework can be applied to a large family of RIS-assisted MU-MIMO URLLC systems that treat interference as noise at the receivers. As an example of such MU-MIMO systems, we consider a multi-cell MIMO RIS-assisted downlink (DL) BC comprising $L$ BSs, as shown in Fig. \ref{Fig-sys-model}. We assume that BS $l$ has $N_{BS,l}$ DL transmission antennas (TAs) and serves $K_l$ multiple-antenna users. The $k$-th user associated to BS $l$, denoted as U$_{lk}$, has $N_{u,lk}$ receive antennas (RAs). Additionally, we assume that there are $M$ reflective passive RISs to assist the BSs, and the $m$-th RIS has $N_{RIS,m}$ elements.
 {Furthermore, we assume perfect, instantaneous, and global CSI, consistent with many other studies on RISs \cite{ huang2019reconfigurable, wu2019intelligent, soleymani2022noma, pan2020multicell, soleymani2022improper, soleymani2022rate}. This assumption is also commonly used in the development of resource allocation solutions for URLLC systems \cite{nasir2020resource, wang2023flexible, ghanem2021joint, pala2022joint, ghanem2020resource, soleymani2023spectral}. These solutions are particularly applicable in systems with large channel coherence time, where the channel state remains stable for extended periods, making channel estimation easier and more accurate. In such systems, resource allocation solutions can be reused across multiple time slots, and the pilot overhead required for acquiring CSI and computing the solutions is relatively low. Additionally, investigating the performance of RISs under the assumption of perfect CSI helps to illustrate the essential tradeoffs in system design and provides an upper bound of the system performance.}

\begin{figure}[t!]
    \centering
\includegraphics[width=.45\textwidth]{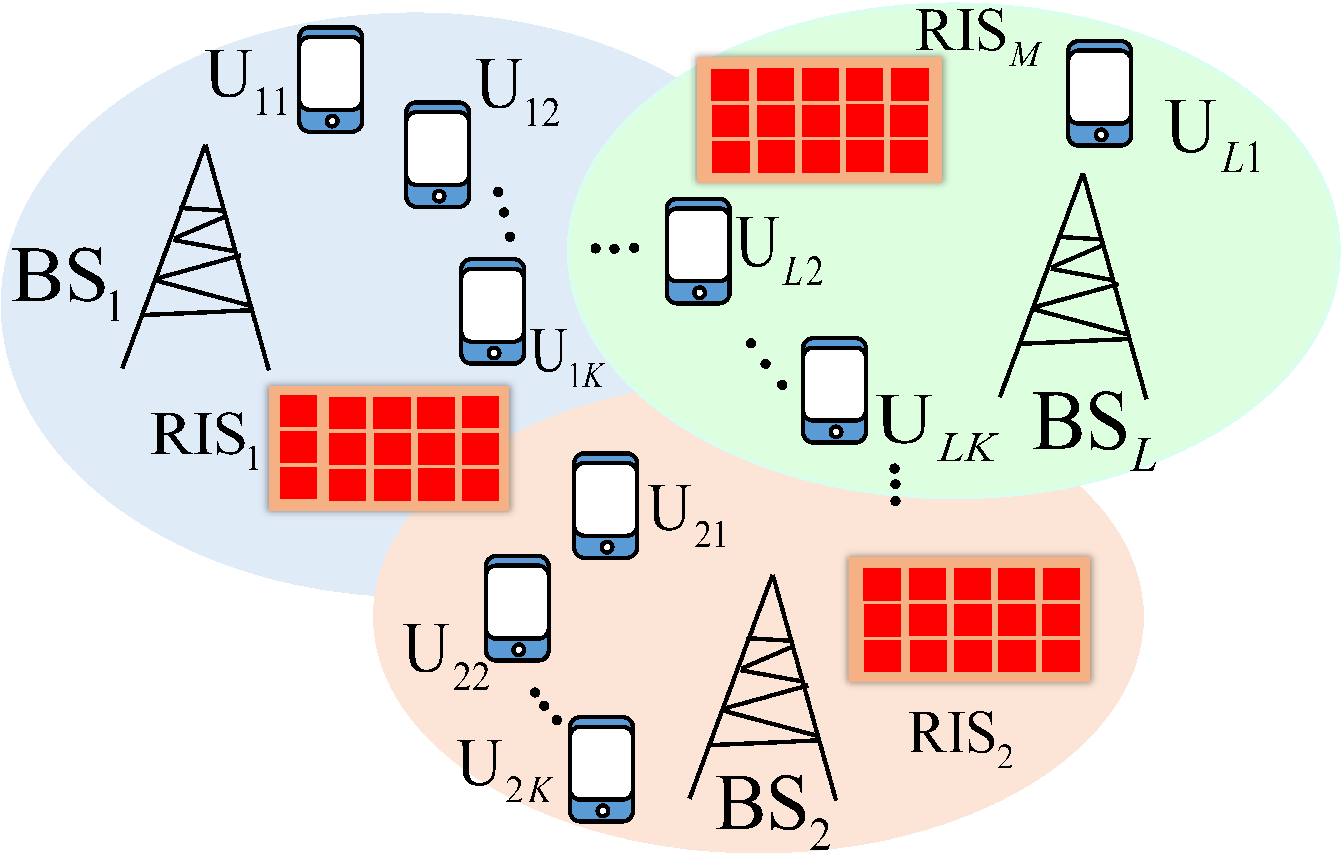} 
     \caption{A multi-cell BC assisted by RISs.}
	\label{Fig-sys-model}
\end{figure}
\subsection{RIS Model}

 {We consider two nearly-passive RIS architectures, namely reflective RIS and STAR-RIS, and employ the RIS model of \cite{pan2020multicell} for the MIMO multi-cell BC.}
\subsubsection{ {Reflective RIS}}
The channel matrix between BS $i$ and U$_{lk}$ as a function of the RIS matrices is given by
 \begin{equation}\label{ch-equ}
\mathbf{H}_{lk,i}\!\left(\{\bf{\Psi}\}\right)\!\!=\!\!\! 
\underbrace{\sum_{m=1}^M\!\!\mathbf{G}_{lk,m}{\bf \Psi}_m\mathbf{G}_{m,i}}_{\text{Link through RIS}}
+\!\!\!\!
\underbrace{\mathbf{F}_{lk,i}}_{\text{Direct link}}
\!\!\!\!
\in\mathbb{C}^{N_{u,lk}\times N_{BS,l}}
\!,
\end{equation}
where 
$\mathbf{F}_{lk,i}\in\mathbb{C}^{N_{u,lk}\times N_{BS,i}}$ is the channel matrix between the $i$-th BS and U$_{lk}$,  
$\mathbf{G}_{lk,m}\in\mathbb{C}^{N_{u,lk}\times N_{RIS,m}}$ is the channel matrix between the $m$-th RIS and U$_{lk}$, and 
$\mathbf{G}_{m,i}\in\mathbb{C}^{N_{RIS,m}\times N_{BS,i}}$ is the channel matrix between the $i$-th BS and the $m$-th RIS. Additionally, $\{{\bf\Psi}\}=\{{\bf\Psi}_m\}_{m=1}^M$ denotes the set of all coefficients of RISs, where ${\bf\Psi}_m\in\mathbb{C}^{N_{RIS,m}\times N_{RIS,m}}$ is a diagonal matrix, containing the vector of reflecting coefficients of the $m$-th RIS 
\begin{equation*}
{\bf\Psi}_m=\text{diag}\left(\psi_{m_1}, \psi_{m_2},\cdots,\psi_{m_{N_{RIS,m}}}\right).
\end{equation*}
Assuming having nearly passive RISs, the absolute value of the RIS coefficients cannot be greater than 1, which results in the following set for the feasible RIS coefficients
\cite[Eq. (11)]{wu2021intelligent}
\begin{equation}
\mathcal{T}_{U}=\left\{\psi_{m_n}:|\psi_{m_n}|^2\leq 1 \,\,\,\forall m,n\right\}.
\end{equation}
In this feasibility set, the amplitude and phase of each RIS element are assumed to be independent optimization variables, which might not be realistic. 
Another common assumption is that the RIS coefficients have to adhere to the unit modulus constraint \cite{di2020smart, wu2021intelligent, wu2019intelligent, kammoun2020asymptotic, yu2020joint, pan2020multicell, zhang2020intelligent}, which leads to 
\begin{equation}
\mathcal{T}_{I}=\left\{\psi_{m_n}:|\psi_{m_n}|= 1 \,\,\,\forall m,n\right\}.
\end{equation}
In this feasibility set, the amplitude of each RIS coefficient is assumed to be equal to 1, while the phases can be optimized. As $\mathcal{T}_{I}\subset\mathcal{T}_{U}$, it can be expected that the algorithms for $\mathcal{T}_{U}$ outperform the algorithms for $\mathcal{T}_{I}$.  

\subsubsection{ {STAR-RIS}}
STAR-RIS provides an omni-directional $360^\circ$ full-place coverage. In STAR-RIS, each component can operate in both reflection and transmission mode \cite{zhang2022intelligent, liu2021star}. Thus, there are two complex-valued optimization parameters per element, when STAR-RIS is employed. We denote the reflection/transmission coefficient for the $n$-th element of the $m$-th RIS as $\psi_{m_n}^r/\psi_{m_n}^t$. Based on the position of the user with respect to STAR-RIS, the STAR-RIS can optimize the channel of the user only through the reflection or transmission coefficients. Therefore,  
the channel between BS $i$ and U$_{lk}$ is
\begin{equation}
\mathbf{H}_{lk,i}\left(\{\bf{\Psi}\}\right)=
{\sum_{m=1}^M\mathbf{G}_{lk,m}{\bf \Psi}_m^{t/r}\mathbf{G}_{m,i}}
+
{\mathbf{F}_{lk,i}},
\end{equation}
where we have ${\bf\Psi}_m^r=\text{diag}\left(\psi_{m_1}^r, \psi_{m_2}^r,\cdots,\psi_{m_{N_{RIS,m}}^r}\right)$
and ${\bf\Psi}_m^t=\text{diag}\left(\psi_{m_1}^t, \psi_{m_2}^t,\cdots,\psi_{m_{N_{RIS,m}}^t}\right)$. 
Assuming operating in a passive mode, the absolute values of the reflection and transmission coefficients have to satisfy
\begin{equation}
|\psi_{m_n}^r|^2+|\psi_{m_n}^t|^2\leq 1, \hspace{1cm}\forall m,n,
\end{equation}
which yields the set
\begin{equation}
\mathcal{T}_{SU}=\left\{\psi_{m_n}^r,\psi_{m_n}^t:|\psi_{m_n}^r|^2+|\psi_{m_n}^t|^2\leq 1 \,\,\,\forall m,n\right\}.
\end{equation}
Assuming operating in the passive mode with equal input and output powers, we have
\begin{equation}\label{eq-8}
|\psi_{m_n}^r|^2+|\psi_{m_n}^t|^2= 1, \hspace{1cm}\forall m,n,
\end{equation}
which results in
\begin{equation}
\mathcal{T}_{SI}=\left\{\psi_{m_n}^r,\psi_{m_n}^t:|\psi_{m_n}^r|^2+|\psi_{m_n}^t|^2= 1 \,\,\,\forall m,n\right\}.
\end{equation}
There are three different STAR-RIS schemes, including the energy splitting (ES), mode switching (MS) and time switching (TS) schemes \cite{mu2021simultaneously, liu2021star}. 
Since the main focus of this work is on evaluating the impact of employing multiple streams per user regimes, we consider only the MS scheme. Note that the MS scheme has a lower implementation complexity than the ES scheme, but its performance may be comparable to that of the ES scheme as shown in, e.g., \cite{soleymani2024maximizing, soleymani2023energy, soleymani2023noma}. 
 The  framework proposed in this treatise can be extended to include the ES and TS schemes by following an approach similar to \cite{soleymani2023energy, soleymani2023noma}.

\subsubsection{ {Brief comparison of reflective RIS and STAR-RIS}}

 {The main difference between a reflective RIS and a STAR-RIS is in the coverage area of these RIS architectures, which makes each suitable for a different set of applications. STAR-RIS can provide omni-directional coverage, while reflective RIS can assist the communication between a BS and a user only if they are in the reflection space of the RIS. Thus, when the RIS can be positioned for ensuring that all the transceivers are in the reflection space of the RIS, a reflective RIS could be a more suitable option. However, when the BS is located outdoors and supports both indoor and outdoor users, STAR-RISs are preferable.}

 {From an optimization point of view, each STAR-RIS element has two complex-valued coefficients. Thus, if the ES scheme is employed,  algorithms conceived for STAR-RIS could have slightly higher computational complexities compared to reflective RIS.  However, if MS or TS schemes are utilized, each RIS element operates either in the reflection mode or in the transmission mode. Thus, only one coefficient for each STAR-RIS element is  optimized, which reduces the computational complexity to the same order as that of reflective RIS algorithms.
}

 { 
The channel matrices are assumed to be linear/affine functions of the RIS elements in both reflective RIS and STAR-RIS. To simplify the notations/equations, we remove this dependency and subsequently denote the channels as $\mathbf{H}_{lk,i}$ for all $l,k,i$, hereafter. Additionally, we denote the set of the feasible RIS elements as $\mathcal{T}$, unless we  refer to a specific set. Note that there are other RIS technologies and/or more practical feasibility sets for STAR/reflective RIS as mentioned in \cite{wu2021intelligent, 9774942}, which should be considered in future studies.}

\subsection{Signal Model}
 
We assume that BS $l$ broadcasts the signal
 \begin{equation}
{\bf x}_l=\sum_{k=1}^{K_l}{\bf x}_{lk} \in \mathbb{C}^{ N_{BS,l}\times 1},
\end{equation}
where ${\bf x}_{lk}$ is the signal intended for user U$_{lk}$, which is a zero-mean complex Gaussian random vector with covariance ${\bf P}_{lk}=\mathbb{E}\{{\bf x}_{lk}{\bf x}_{lk}^H\}$, where $\mathbb{E}\{{\bf x}\}$ is the mathematical expectation of ${\bf x}$. We assume that the zero-mean signals  ${\bf x}_{lk}$ are independent from each other, i.e., $\mathbb{E}\{{\bf x}_{lk}{\bf x}_{ij}^H\}={\bf 0}$ for $i\neq l$ and/or $j\neq k$. Additionally, we denote the covariance matrix of ${\bf x}_l$ by ${\bf P}_{l}=\mathbb{E}\{{\bf x}_{l}{\bf x}_{l}^H\}$. Since the signals ${\bf x}_{lk}$ are zero-mean and independent random vectors for all $l$ and $k$, we have $\mathbf{P}_{l} =\sum_{k}\mathbf{P}_{lk}$. The set containing all the feasible transmit covariance matrices is denoted as $\mathcal{P}$ and it is given by
\begin{equation}
\mathcal{P}=
\left\{
\mathbf{P}_{lk}:
\text{Tr}\left(\mathbf{P}_{l}\right)\leq p_l, \mathbf{P}_{lk}\succcurlyeq\mathbf{0},\,\, \forall l,k\right\},
\end{equation}
where $p_l$ is the power budget of the $l$-th BS.

The received signal at U$_{lk}$ is given by
\begin{align}\nonumber
{\bf y}_{lk}&=\sum_i{\bf H}_{lk,i}{\bf x}_i+{\bf n}_{lk},\\
&=\!\!
\underbrace{
{\bf H}_{lk,l}{\bf x}_{lk}
}_{\text{Desired Signal}}
\!+\!\!
\underbrace{
\sum_{j\neq k}{\bf H}_{lk,l}{\bf x}_{lj}
}_{\text{Intracell Interference}}
\!+\!
\underbrace{
\sum_{i\neq l}{\bf H}_{lk,i}{\bf x}_{i}
}_{\text{Intercell Interference}}
\!+\!
\underbrace{{\bf  n}_{lk}
}_{\text{Noise}},\label{eq12}
\end{align}
where ${\bf n}_{lk}$ is the zero-mean additive white Gaussian noise at U$_{lk}$ with covariance matrix $\sigma^2{\bf I}$, where ${\bf I}$ denotes the identity matrix.
  {
Note that, in \eqref{eq12}, the differences between the intercell and intracell links are carefully taken into account, and each user is indeed affected by both \textit{intercell} and \textit{intracell} interference.} In this paper, we treat interference as noise, which is optimal for maximizing the sum rate \cite{annapureddy2009gaussian} or the generalized degree of freedom \cite{geng2015optimality} when the interference is weak. An alternative strategy for treating interference as noise (TIN) is to detect and cancel interference, which is known as successive interference cancellation (SIC), and it is optimal when the interference is strong. SIC requires more advanced user devices as well as more sophisticated signaling design. Moreover, to detect and cancel interference at the users takes some time, which may lead to violating the latency constraint in URLLC-related applications. 

 {Note that we employ Gaussian signaling in this work similar to most studies of wireless communication systems both with and without RISs. In practice, typically discrete constellations are employed. Studies based on Gaussian signaling are nevertheless important since they provide valuable insights into the system performance and represent an upper bound for the performance of the technologies studied. Additionally, there are studies on the comparison of Gaussian signals and discrete constellations, e.g., \cite{forney1998modulation, santamaria2018information, javed2020journey}. The performance gap between the discrete constellations and Gaussian signals grows as the number of bits/symbols in the discrete constellations increases, but it eventually saturates. To account for this performance gap and the idealized assumption, one can employ a signal-to-noise ratio (SNR) offset.}

 \subsection{Channel Dispersion, Rate and EE Expressions}
 A MIMO channel can be modeled as a set of parallel AWGN channels, and  \cite[Theorem 78]{polyanskiy2010} can be employed to obtain the achievable rate of MIMO channels associated with FBL coding. Note that \cite[Theorem 78]{polyanskiy2010} is based on the optimal power allocation for a point-to-point MIMO communication link; however, the FBL rate expressions can be formulated for any arbitrary power allocation as shown in \cite[Section 4.5.4]{polyanskiy2010}. In the following lemma, we calculate the achievable FBL rate of users, when the interference is treated as noise for decoding the corresponding signal at the receivers. 
\begin{lemma}[\!\cite{polyanskiy2010}]\label{lem-r}
The second-order rate of user U$_{lk}$ for  FBL coding along with the normal approximation (NA) is given by
\begin{equation}\label{1-multi}
r_{lk}=
\underbrace{\log \left|{\bf I} +{\bf D}^{-1}_{lk}{\bf S}_{lk} \right|}_{\text{\small Shannon Rate}}
-Q^{-1}(\epsilon)\sqrt{\frac{V_{lk}}{n_t}},
\end{equation}
where $n_t$ is the packet length, ${\bf S}_{lk}={\bf H}_{lk,l}{\bf P}_{lk}{\bf H}_{lk,l}^H$ is the covariance matrix of the desired signal at the user U$_{lk}$, while  ${\bf D}_{lk}$  is the covariance  matrix of the interfering signals plus noise, given by
\begin{equation}
{\bf D}_{lk}\!=\!\sigma^2{\bf I}
\!+\!\!\!
\sum_{i=1,i\neq l}^L\!\!{\bf H}_{lk,i}{\bf P}_i{\bf H}_{lk,i}^H
\!\!+\!\!\!
\sum_{j=1,j\neq k}^K{\bf H}_{lk,l}{\bf P}_{lj}{\bf H}_{lk,l}^H.
\end{equation}
Here, the first-order Shannon rate can also be written as
\begin{equation}\label{2}
C_{lk}\!=\!
{\log\! \left|{\bf I}\! +\!{\bf D}^{-1}_{lk}{\bf S}_{lk} \right|}\!\!
=\!\log \left|{\bf I} +{\bf \Lambda}_{lk} \right|\!=\!\!\!\sum_{i=1}^{I}\!\log\!\left(\! 1\!+\!\lambda_{lki} \right)\!,
\end{equation}
where ${\bf \Lambda}_{lk}=\text{diag}\left(\lambda_{lk1},\lambda_{lk2},\cdots,\lambda_{lkI}\right)$ is a diagonal matrix, containing the non-zero eigenvalues of the positive semidefinite (PSD) matrix ${\bf D}^{-1}_{lk}{\bf S}_{lk}$, and  
 {
$I\leq\min(N_{BS,l},N_{u,lk})$
} is equal to the rank of ${\bf H}_{lk,l}{\bf P}_{lk}{\bf H}_{lk,l}^H$, which also represents the number of parallel channels. The parameter $\lambda_{lki}$ is actually the signal-to-interference-plus-noise ratio (SINR) at the $i$-th parallel channel of user U$_{lk}$.
Finally, $V_{lk}=\sum_{i=1}^{I} V_{lki}$ is the channel dispersion of U$_{lk}$, where $V_{lki}$ is the channel dispersion of the $i$-th parallel channel of user U$_{lk}$. 
\end{lemma}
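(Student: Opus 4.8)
The plan is to reduce the interference-limited MIMO link seen by U$_{lk}$ to a bank of parallel scalar AWGN channels, and then to invoke \cite[Theorem 78]{polyanskiy2010} together with its arbitrary-power-allocation version in \cite[Section 4.5.4]{polyanskiy2010}. First I would note that, since the codewords $\{{\bf x}_{ij}\}$ are independent zero-mean Gaussian vectors, for fixed covariance matrices the sum of the intracell interference, the intercell interference, and the thermal noise in \eqref{eq12} is a zero-mean Gaussian vector with covariance ${\bf D}_{lk}\succ{\bf 0}$ that is independent of the desired term ${\bf H}_{lk,l}{\bf x}_{lk}$. Hence, from the standpoint of U$_{lk}$, the channel is \emph{exactly} a point-to-point MIMO Gaussian channel with colored noise and Gaussian input of covariance ${\bf P}_{lk}$, so the point-to-point FBL results apply without approximation.

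Second, I would whiten the noise. Since ${\bf D}_{lk}$ is positive definite, left-multiplying ${\bf y}_{lk}$ by ${\bf D}_{lk}^{-1/2}$ is an invertible operation and therefore preserves the achievable FBL rate (the decoder simply pre-processes by ${\bf D}_{lk}^{-1/2}$); it yields the equivalent channel $\tilde{\bf y}_{lk}={\bf D}_{lk}^{-1/2}{\bf H}_{lk,l}{\bf x}_{lk}+\tilde{\bf n}_{lk}$ with white noise $\mathbb{E}\{\tilde{\bf n}_{lk}\tilde{\bf n}_{lk}^H\}={\bf I}$. The relevant Gram matrix is ${\bf D}_{lk}^{-1/2}{\bf S}_{lk}{\bf D}_{lk}^{-1/2}$, which is Hermitian PSD, has rank $I=\text{rank}\,{\bf S}_{lk}\le\min(N_{BS,l},N_{u,lk})$, and is similar to ${\bf D}_{lk}^{-1}{\bf S}_{lk}$ through the transform ${\bf D}_{lk}^{1/2}(\cdot){\bf D}_{lk}^{-1/2}$, hence has the same eigenvalues $\{\lambda_{lki}\}_{i=1}^I$. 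Diagonalizing it by a unitary rotation at the receiver and the matched rotation at the transmitter decomposes the whitened channel into $I$ non-interfering scalar AWGN subchannels, the $i$-th with SNR $\lambda_{lki}$, plus zero-gain directions that carry no information.

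Third, I would apply \cite[Theorem 78]{polyanskiy2010} to these $I$ parallel AWGN subchannels, which share the common block length $n_t$ and the common target error probability $\epsilon$; this yields $r_{lk}=\sum_{i=1}^I C_{lki}-Q^{-1}(\epsilon)\sqrt{(\sum_{i=1}^I V_{lki})/n_t}$ with $C_{lki}=\log(1+\lambda_{lki})$ and $V_{lki}$ the scalar-AWGN dispersion at SNR $\lambda_{lki}$. Collecting terms and using $\log|{\bf I}+{\bf D}_{lk}^{-1}{\bf S}_{lk}|=\log|{\bf I}+{\bf D}_{lk}^{-1/2}{\bf S}_{lk}{\bf D}_{lk}^{-1/2}|=\log|{\bf I}+{\bf \Lambda}_{lk}|=\sum_i\log(1+\lambda_{lki})$ (invariance of the determinant under similarity, or equivalently $|{\bf I}+{\bf A}{\bf B}|=|{\bf I}+{\bf B}{\bf A}|$) recovers \eqref{1-multi}, the chain of equalities in \eqref{2}, and $V_{lk}=\sum_i V_{lki}$.

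The main obstacle is conceptual rather than computational, and it lies in the two justifications the reduction rests upon. The first is that, for an \emph{achievability} claim, treating the colored interference-plus-noise as the channel noise is legitimate; this is sound here precisely because independent Gaussian codebooks render that aggregate exactly Gaussian, so nothing is lost relative to the point-to-point benchmark. The second is that the dispersions of the $I$ subchannels must add \emph{inside} the square root under a single joint error constraint, i.e., the penalty term is $\sqrt{\sum_i V_{lki}}$ and not $\sum_i\sqrt{V_{lki}}$; this is exactly the content of \cite[Theorem 78]{polyanskiy2010}, while the fact that ${\bf P}_{lk}$ (hence $\{\lambda_{lki}\}$) is arbitrary rather than water-filling is handled by \cite[Section 4.5.4]{polyanskiy2010}. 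Once these two points are granted, the remainder is routine linear-algebraic bookkeeping.
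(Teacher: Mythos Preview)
Your proposal is correct and follows exactly the route the paper indicates: the paper does not supply a separate proof for this lemma but states it as a cited result, prefacing it only with the remark that a MIMO channel can be modeled as a set of parallel AWGN channels so that \cite[Theorem~78]{polyanskiy2010} (and its arbitrary-power-allocation variant in \cite[Section~4.5.4]{polyanskiy2010}) applies when interference is treated as noise. Your whitening-plus-diagonalization argument is precisely the standard reduction that makes this citation rigorous, so there is nothing to add.
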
  
The optimal channel dispersion of the $i$-th parallel channel is given by \cite{scarlett2016dispersion}
\begin{equation} 
V_{lki}=1-\frac{1}{(1+\lambda_{lki})^2},
\end{equation}
where is $\lambda_{lki}$ is given in Lemma \ref{lem-r}.
Unfortunately, the optimal channel dispersion attains the minimum value of $V_{lki}$ for all $l$, $k$, and $i$, but Gaussian signals cannot achieve it in the presence of interference. In \cite{scarlett2016dispersion}, a coding scheme was proposed for independent, identically distributed (iid) Gaussian signals in interference channels, which has the following channel dispersion
\begin{equation} 
V_{lki}=2\left(1-\frac{1}{1+\lambda_{lki}}\right).
\end{equation}
In the following lemma, we present closed-form matrix expressions for the optimal channel dispersion and the achievable channel dispersion in \eqref{1-multi}. 

\begin{lemma}\label{lem-1}
The optimal channel dispersion can be written as
\begin{equation} \label{dis-1}
V_{lk}=\text{\em Tr}\left({\bf I}-({\bf I}+{\bf D}^{-1}_{lk}{\bf S}_{lk})^{-2}\right).
\end{equation}
Additionally,
the achievable channel dispersion for the  scheme proposed in \cite{scarlett2016dispersion} can be written in the following matrix format 
\begin{align} \label{eq=10}
V_{lk}&=2\text{\em Tr}\left({\bf I}-({\bf I}+{\bf D}^{-1}_{lk}{\bf S}_{lk})^{-1}\right)
\\
\label{eq=11}
&=2\text{\em Tr}\left({\bf I}-{\bf D}_{lk}({\bf D}_{lk}+{\bf S}_{lk})^{-1}\right)
\\
\label{eq=12}
&=2\text{\em Tr}\left({\bf S}_{lk}({\bf D}_{lk}+{\bf S}_{lk})^{-1}\right).
\end{align}
\end{lemma}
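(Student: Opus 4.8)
The plan is to reduce everything to the eigenvalues of $\mathbf{D}_{lk}^{-1}\mathbf{S}_{lk}$ and then translate back to matrix form via the cyclic property of the trace. First I would observe that, although $\mathbf{D}_{lk}^{-1}\mathbf{S}_{lk}$ is in general not symmetric, it is similar to $\mathbf{D}_{lk}^{-1/2}\mathbf{S}_{lk}\mathbf{D}_{lk}^{-1/2}$ (a legitimate similarity since $\mathbf{D}_{lk}\succ\mathbf{0}$), and the latter matrix is PSD. Hence $\mathbf{D}_{lk}^{-1}\mathbf{S}_{lk}$ is diagonalizable with real, nonnegative eigenvalues; these are precisely the SINRs $\lambda_{lki}\geq 0$ of Lemma~\ref{lem-r}, for $i=1,\dots,I$, together with $N_{u,lk}-I$ additional zeros stemming from the rank deficiency of $\mathbf{S}_{lk}$. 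Consequently, for any scalar function $f$ defined on those eigenvalues, $\text{Tr}\, f\!\left(\mathbf{D}_{lk}^{-1}\mathbf{S}_{lk}\right)=\sum_{i=1}^{N_{u,lk}} f(\mu_i)$ with $\{\mu_i\}$ the eigenvalues just described, and whenever $f(0)=0$ only the $I$ nonzero eigenvalues contribute.

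The next step is to apply this spectral-mapping identity with the appropriate $f$. The matrix $\mathbf{I}+\mathbf{D}_{lk}^{-1}\mathbf{S}_{lk}$ has eigenvalues $1+\lambda_{lki}$ for $i=1,\dots,I$ and $1$ otherwise, so $\mathbf{I}-(\mathbf{I}+\mathbf{D}_{lk}^{-1}\mathbf{S}_{lk})^{-2}$ has eigenvalues $1-(1+\lambda_{lki})^{-2}$ and $0$; taking the trace yields $\sum_{i=1}^{I}\bigl(1-(1+\lambda_{lki})^{-2}\bigr)=\sum_{i=1}^{I}V_{lki}=V_{lk}$, which is exactly \eqref{dis-1}. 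Repeating the computation with $f(x)=2\bigl(1-(1+x)^{-1}\bigr)$ gives $2\,\text{Tr}\bigl(\mathbf{I}-(\mathbf{I}+\mathbf{D}_{lk}^{-1}\mathbf{S}_{lk})^{-1}\bigr)=2\sum_{i=1}^{I}\bigl(1-(1+\lambda_{lki})^{-1}\bigr)=\sum_{i=1}^{I}V_{lki}=V_{lk}$, using the per-channel achievable dispersion $V_{lki}=2\bigl(1-1/(1+\lambda_{lki})\bigr)$ of \cite{scarlett2016dispersion}; this establishes \eqref{eq=10}.

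Finally, I would obtain \eqref{eq=11} and \eqref{eq=12} by purely algebraic manipulation. Writing $(\mathbf{I}+\mathbf{D}_{lk}^{-1}\mathbf{S}_{lk})^{-1}=\bigl(\mathbf{D}_{lk}^{-1}(\mathbf{D}_{lk}+\mathbf{S}_{lk})\bigr)^{-1}=(\mathbf{D}_{lk}+\mathbf{S}_{lk})^{-1}\mathbf{D}_{lk}$ and invoking the cyclic invariance of the trace turns $\text{Tr}\bigl(\mathbf{I}-(\mathbf{D}_{lk}+\mathbf{S}_{lk})^{-1}\mathbf{D}_{lk}\bigr)$ into $\text{Tr}\bigl(\mathbf{I}-\mathbf{D}_{lk}(\mathbf{D}_{lk}+\mathbf{S}_{lk})^{-1}\bigr)$, i.e.\ \eqref{eq=11}. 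Then $\mathbf{I}-\mathbf{D}_{lk}(\mathbf{D}_{lk}+\mathbf{S}_{lk})^{-1}=(\mathbf{D}_{lk}+\mathbf{S}_{lk}-\mathbf{D}_{lk})(\mathbf{D}_{lk}+\mathbf{S}_{lk})^{-1}=\mathbf{S}_{lk}(\mathbf{D}_{lk}+\mathbf{S}_{lk})^{-1}$, which is \eqref{eq=12}. The only real subtlety I expect is the bookkeeping around the rank deficiency of $\mathbf{S}_{lk}$: the $N_{u,lk}-I$ spurious unit eigenvalues of $\mathbf{I}+\mathbf{D}_{lk}^{-1}\mathbf{S}_{lk}$ must not pollute the trace, which is guaranteed precisely because both $1-(1+x)^{-2}$ and $2\bigl(1-(1+x)^{-1}\bigr)$ vanish at $x=0$; this point is worth stating explicitly, while everything else is routine linear algebra.
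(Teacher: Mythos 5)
Your proposal is correct and follows essentially the same route as the paper's proof: identifying the eigenvalues of $({\bf I}+{\bf D}^{-1}_{lk}{\bf S}_{lk})^{-2}$ and $({\bf I}+{\bf D}^{-1}_{lk}{\bf S}_{lk})^{-1}$ as $(1+\lambda_{lki})^{-2}$ and $(1+\lambda_{lki})^{-1}$, summing via the trace, and then obtaining \eqref{eq=11} and \eqref{eq=12} by elementary matrix factorization. You are in fact more explicit than the paper on two points it leaves as ``readily verified''---the similarity to the PSD matrix ${\bf D}_{lk}^{-1/2}{\bf S}_{lk}{\bf D}_{lk}^{-1/2}$ and the fact that the spurious unit eigenvalues contribute nothing because both scalar maps vanish at $x=0$---which is a welcome tightening rather than a deviation.
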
 
\begin{proof}
It is widely exploited that the trace of a positive semi-definite matrix is equal to the summation of its eigenvalues. It can be readily verified that the non-zero eigenvalues of $({\bf I}+{\bf D}^{-1}_{lk}{\bf S}_{lk})^{-2}$ are equal to $\left(1+\lambda_{lki}\right)^{-2}$, $i\in\{1,2,\cdots,I\}$,
which proves the equality in \eqref{dis-1}. Note that if ${\bf X}$ is a positive semi-definite matrix with non-zero eigenvalues $\lambda_i$, then its pseudo-inverse, denoted as ${\bf X}^{-1}$, is also a positive semi-definite matrix, and its non-zero  eigenvalues are $\lambda_i^{-1}$.

Similarly, it can be readily verified that the eigenvalues of $({\bf I}+{\bf D}^{-1}_{lk}{\bf S}_{lk})^{-1}$ are equal to $\left(1+\lambda_{lki}\right)^{-1}$, $i\in\{1,2,\cdots,I\}$,
which yields \eqref{eq=10}. Employing a simple matrix factorization, it can be easily verified that \eqref{eq=10}, \eqref{eq=11}, and \eqref{eq=12} are equivalent.
\end{proof} 
  {
\begin{remark}
   The reliability constraint is modeled by utilizing the maximum tolerable error rate, $\epsilon$, in the FBL rates. Moreover, the coding length $n_t$ should be proportional to the tolerable latency. Indeed, a more stringent latency constraint leads to employing a shorter block length \cite{xu2022rate, xu2022max}. Additionally, the latency constraint can be modeled as a constraint on the minimum rate as discussed in \cite[Remark 1]{liu2023energy}, \cite[Sec. II.D]{soleymani2023optimization}, \cite{he2021beamforming}. 
\end{remark}
} 
The EE of U$_{lk}$ is defined as \cite{zappone2015energy}
\begin{equation}
e_{lk}=\frac{r_{lk}}{P_c+\eta\text{Tr}\left(\mathbf{P}_{lk}\right)},
\end{equation}
where $\eta^{-1}$ is the power efficiency of the transmit devices at the BSs, $P_c$ is the constant power consumption of the system (including the devices of the BSs, RISs and U$_{lk}$) to transmit data to a user, which is given by \cite[Eq. (27)]{soleymani2022improper}, and
 $r_{lk}$ is given by Lemma \ref{lem-r}. Note that to compute $P_c$, the constant power of the devices of the BSs and RISs is normalized by the number of users served.
Moreover, the global EE (GEE) of the network is defined as \cite{zappone2015energy}
 \begin{equation}
g=\frac{\sum_{lk}r_{lk}}{LKP_c+\eta\sum_l\text{Tr}\left(\mathbf{P}_{l}\right)},
\end{equation}
which quantifies how energy efficient the network is. 
 {Finally, the transmission delay of U$_{lk}$ upon transmitting a packet with length $n_t$ is $d_{lk}=\frac{n_t}{r_{lk}}$.}

\subsection{Problem Statement}
We consider a general optimization problem for URLLC systems formulated as follows
\begin{subequations}\label{opt}
\begin{align}
 \underset{\{\mathbf{P}\}\in\mathcal{P},\{\bf{\Psi}\}\in\mathcal{T}
 }{\max}\,\,\,  & 
  f_0\!\left(\left\{\mathbf{P}\right\}\!,\!\{\bf{\Psi}\}\right)\!\!\!\! \\
  \text{s.t.}\,\,   \,&  f_i\left(\left\{\mathbf{P}\right\}\!,\!\{\bf{\Psi}\}\right)\geq0,\,\forall i,
\\
\label{4-c}
 &
r_{lk}\geq r^{th},\,\,\,\,\forall l,k,
 \end{align}
\end{subequations}
where constraint \eqref{4-c} can be interpreted as a latency constraint for each user, as discussed in \cite[Sec. II.D]{soleymani2023spectral}. {Moreover, functions $f_i$, $\forall i$ are, in general, non-linear functions of the optimization variables. These functions can be, but are not restricted to, linear functions of the rates/EEs and/or transmit/receive powers.  {For instance, $f_i$ can be a function of the sum rate, minimum rate/EE, transmission/receive power, interference temperature at a user, transmission delay and so on. Additionally, $f_i$ can be a non-linear function of the rates/EEs such as the geometric mean of the rates as in \cite{yu2021maximizing}. Note that \eqref{opt} may also include minimization problems such as the total power minimization subject to a given target rate, maximum transmission delay minimization, and interference temperature minimization. In this case, $f_i$ can be chosen as, e.g., $-\sum_l\text{Tr}({\bf P}_l)$ or $-\max_{\forall lk} \{d_{lk} \}$. }
Therefore, the general problem in \eqref{opt} can include an extensive range of optimization scenarios,
encompassing the maximization of the minimum weighted
rate, sum rate, global EE and minimum EE. We refer the reader
to  \cite[Sec. II.B]{soleymani2022rate} for more discussions on the format of the functions $f_i$s as well as of the family of optimization problems that can be formulated as \eqref{opt}.}
\section{Proposed optimization framework}\label{sec-iii}
In this section, we propose iterative schemes for solving 
\eqref{opt} 
by leveraging  AO, MM-based, and FP algorithms. 
Specifically, we first fix the RIS coefficients to $\{{\bf \Psi}^{(t-1)}\}$ and update the transmit covariance matrices as $\{{\bf P}^{(t)}\}$ by solving \eqref{opt}. We then alternate and update the RIS coefficients, while 
$\{{\bf P}\}$ is fixed to $\{{\bf P}^{(t)}\}$. We iterate this procedure until  convergence is reached. 
Unfortunately, the optimization problems are non-convex and complicated even when the RIS elements (or covariance matrices) are fixed. Thus, we propose a suboptimal scheme based on MM to solve the corresponding problems.
Below, we present our solutions for updating the transmit covariance matrices and RIS elements in separate subsections. 

\subsection{Updating Transmit Covariance Matrices}\label{sec-iii-a}
To update $\{{\bf P}\}$, 
we introduce a new set of variables $\{{\bf Q}\}=\{{\bf Q}_{lk}\}_{\forall lk}$, where ${\bf Q}_{lk}$ is a positive semi-definite matrix and ${\bf P}_{lk}={\bf Q}_{lk}{\bf Q}_{lk}^H$. Equivalently, we can compute ${\bf Q}_{lk}$ as ${\bf P}_{lk}^{1/2}$. 
To attain a suboptimal solution for \eqref{opt}, we leverage an MM-based technique. More specifically, we first obtain suitable concave surrogate functions for the rates. Then, we update ${\bf P}_{lk}^{(t)}={\bf Q}_{lk}^{(t)}{\bf Q}_{lk}^{(t)^H}$, $\forall l,k$, by solving the corresponding surrogate optimization problems. 
To derive concave lower bounds for the FBL rates, we utilize the bounds in the following lemmas.  
\begin{lemma}\label{lem-2}
Consider the arbitrary matrices ${\bf \Lambda}$, $\bar{\bf \Lambda}$ and positive semi-definite matrices  ${\bf \Upsilon}$, $\bar{\bf \Upsilon}$. Then, the following inequality holds for all feasible ${\bf \Lambda}$, $\bar{\bf \Lambda}$, ${\bf \Upsilon}$, and $\bar{\bf \Upsilon}$:
\begin{multline}
\label{eq10}
f\left({\bf \Lambda},{\bf \Upsilon}\right)=\text{\em Tr}\left({\bf \Upsilon}^{-1}{\bf \Lambda}{\bf \Lambda}^H\right)\geq 
2\mathfrak{R}\left\{\text{\em Tr}\left(\bar{\bf \Upsilon}^{-1}\bar{\bf \Lambda}{\bf \Lambda}^H\right)\right\}
\\-
\text{\em Tr}\left(\bar{\bf \Upsilon}^{-1}\bar{\bf \Lambda}\bar{\bf \Lambda}^H\bar{\bf \Upsilon}^{-1}{\bf \Upsilon}\right),
\end{multline}
where $\mathfrak{R}\{x\}$ returns the real value of $x$.
\end{lemma}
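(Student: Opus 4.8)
The plan is to recognize the right-hand side of \eqref{eq10} as the first-order Taylor expansion of the convex function $f$ about the point $(\bar{\bf \Lambda},\bar{\bf \Upsilon})$, so that the inequality is exactly the supporting-hyperplane inequality for a jointly convex function. First I would establish joint convexity of $f({\bf \Lambda},{\bf \Upsilon})=\text{Tr}({\bf \Upsilon}^{-1}{\bf \Lambda}{\bf \Lambda}^H)$ over the domain (arbitrary ${\bf \Lambda}$, positive semi-definite ${\bf \Upsilon}$); this is the matrix-fractional function, whose convexity is classical (it is the perspective of the quadratic form, or equivalently it can be written as an infimum of linear-in-$({\bf \Lambda},{\bf \Upsilon})$ expressions via a Schur complement argument, $f({\bf \Lambda},{\bf \Upsilon})=\min_{\bf Z}\{\text{Tr}({\bf Z}): \begin{bmatrix}{\bf Z}&{\bf \Lambda}^H\\ {\bf \Lambda}&{\bf \Upsilon}\end{bmatrix}\succcurlyeq{\bf 0}\}$). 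Then any convex function lies above its tangent plane, giving $f({\bf \Lambda},{\bf \Upsilon})\geq f(\bar{\bf \Lambda},\bar{\bf \Upsilon})+\langle\nabla f(\bar{\bf \Lambda},\bar{\bf \Upsilon}),({\bf \Lambda},{\bf \Upsilon})-(\bar{\bf \Lambda},\bar{\bf \Upsilon})\rangle$.

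Next I would compute the two partial gradients at $(\bar{\bf \Lambda},\bar{\bf \Upsilon})$: differentiating in ${\bf \Lambda}$ gives a term proportional to $\bar{\bf \Upsilon}^{-1}\bar{\bf \Lambda}$, and differentiating in ${\bf \Upsilon}$ gives a term proportional to $-\bar{\bf \Upsilon}^{-1}\bar{\bf \Lambda}\bar{\bf \Lambda}^H\bar{\bf \Upsilon}^{-1}$ (using $\partial\,\text{Tr}({\bf \Upsilon}^{-1}{\bf A})=-{\bf \Upsilon}^{-1}{\bf A}{\bf \Upsilon}^{-1}$). Substituting these into the tangent-plane inequality and collecting terms, the constant part $f(\bar{\bf \Lambda},\bar{\bf \Upsilon})$ together with the ${\bf \Lambda}$-linear part combines into $2\mathfrak{R}\{\text{Tr}(\bar{\bf \Upsilon}^{-1}\bar{\bf \Lambda}{\bf \Lambda}^H)\}$, and the ${\bf \Upsilon}$-linear part contributes $-\text{Tr}(\bar{\bf \Upsilon}^{-1}\bar{\bf \Lambda}\bar{\bf \Lambda}^H\bar{\bf \Upsilon}^{-1}{\bf \Upsilon})$, which is precisely \eqref{eq10}. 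The real-part operator appears because ${\bf \Lambda}$ is complex and one should take the real inner product $\mathfrak{R}\{\text{Tr}(\cdot)\}$.

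Alternatively — and this is perhaps the cleaner route to write out — one can avoid invoking general convexity theory and instead prove \eqref{eq10} directly by showing the difference of the two sides is a trace of a positive semi-definite matrix. Writing ${\bf \Delta}={\bf \Lambda}-\bar{\bf \Lambda}$ (or, more symmetrically, working with ${\bf E}={\bf \Lambda}^H-\bar{\bf \Upsilon}^{-1}{\bf \Upsilon}(\ldots)$ type completions of the square), one verifies the identity
\begin{equation*}
\text{Tr}\!\left({\bf \Upsilon}^{-1}{\bf \Lambda}{\bf \Lambda}^H\right)-2\mathfrak{R}\!\left\{\text{Tr}\!\left(\bar{\bf \Upsilon}^{-1}\bar{\bf \Lambda}{\bf \Lambda}^H\right)\right\}+\text{Tr}\!\left(\bar{\bf \Upsilon}^{-1}\bar{\bf \Lambda}\bar{\bf \Lambda}^H\bar{\bf \Upsilon}^{-1}{\bf \Upsilon}\right)=\text{Tr}\!\left({\bf \Upsilon}^{-1}\big({\bf \Lambda}-{\bf \Upsilon}\bar{\bf \Upsilon}^{-1}\bar{\bf \Lambda}\big)\big({\bf \Lambda}-{\bf \Upsilon}\bar{\bf \Upsilon}^{-1}\bar{\bf \Lambda}\big)^H\right)\geq 0,
\end{equation*}
since the right-hand side is $\text{Tr}$ of a product of the PSD matrix ${\bf \Upsilon}^{-1}$ with a PSD matrix of the form ${\bf Z}{\bf Z}^H$, hence nonnegative (using the cyclic property and that $\text{Tr}({\bf A}{\bf B})\geq 0$ for PSD ${\bf A},{\bf B}$). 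Expanding the right-hand side and matching terms is a short, purely algebraic check that also exhibits the equality condition ${\bf \Lambda}={\bf \Upsilon}\bar{\bf \Upsilon}^{-1}\bar{\bf \Lambda}$, which in particular holds at $({\bf \Lambda},{\bf \Upsilon})=(\bar{\bf \Lambda},\bar{\bf \Upsilon})$, confirming that \eqref{eq10} is a tight surrogate suitable for MM. The main obstacle is simply bookkeeping: getting the completion-of-the-square matrix right when ${\bf \Upsilon}$ need not commute with the other factors, and carrying the $\mathfrak{R}\{\cdot\}$ correctly through the cross term; once the identity above is in hand, positivity is immediate.
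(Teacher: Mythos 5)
Your first route is exactly the paper's proof: Lemma~\ref{lem-2} is established there by noting that $f\left({\bf \Lambda},{\bf \Upsilon}\right)=\text{Tr}\left({\bf \Upsilon}^{-1}{\bf \Lambda}{\bf \Lambda}^H\right)$ is jointly convex (citing the MM tutorial literature) and then writing down the first-order Taylor expansion at $(\bar{\bf \Lambda},\bar{\bf \Upsilon})$, with the gradients you describe. Your alternative completion-of-the-square identity is a genuine bonus the paper does not give: expanding $\text{Tr}\bigl({\bf \Upsilon}^{-1}({\bf \Lambda}-{\bf \Upsilon}\bar{\bf \Upsilon}^{-1}\bar{\bf \Lambda})({\bf \Lambda}-{\bf \Upsilon}\bar{\bf \Upsilon}^{-1}\bar{\bf \Lambda})^H\bigr)$ and using the cyclic property does reproduce the three terms of \eqref{eq10} exactly (the cross terms collapse to $2\mathfrak{R}\{\text{Tr}(\bar{\bf \Upsilon}^{-1}\bar{\bf \Lambda}{\bf \Lambda}^H)\}$ because ${\bf \Upsilon},\bar{\bf \Upsilon}$ are Hermitian), so it gives a self-contained, purely algebraic proof that avoids invoking joint convexity as a black box and exhibits the tightness condition needed for MM. The only caveat, shared with the paper's own statement, is that ${\bf \Upsilon}$ and $\bar{\bf \Upsilon}$ must in fact be positive definite (or the inverses read as pseudo-inverses on the relevant range) for ${\bf \Upsilon}^{-1}$ and $\bar{\bf \Upsilon}^{-1}$ to make sense; with that understood, both of your arguments are correct.
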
  
\begin{proof} 
Function $f\left({\bf \Lambda},{\bf \Upsilon}\right)$ is jointly convex in ${\bf \Lambda}$ and ${\bf \Upsilon}$ \cite{sun2017majorization}. Thus, we can employ the first-order Taylor expansion to obtain an affine lower-bound for $f(\cdot)$ as follows
\begin{multline}
\label{eq11}
f\left({\bf \Lambda},{\bf \Upsilon}\right)\geq 
f\left(\bar{\bf \Lambda},\bar{\bf \Upsilon}\right)+2\mathfrak{R}\left\{\frac{\partial f\left({\bf \Lambda},{\bf \Upsilon}\right)}{\partial {\bf \Lambda}}\vert_{\bar{\bf \Lambda},\bar{\bf \Upsilon}}\left({\bf \Lambda}-\bar{\bf \Lambda}\right)\right.
\\
+\left.\frac{\partial f\left({\bf \Lambda},{\bf \Upsilon}\right)}{\partial {\bf \Upsilon}}\vert_{\bar{\bf \Lambda},\bar{\bf \Upsilon}}\left({\bf \Upsilon}-\bar{\bf \Upsilon}\right)\right\},
\end{multline}
where $\bar{\bf \Lambda}$ and $\bar{\bf \Upsilon}$ are any arbitrary feasible points, and 
$\frac{\partial f\left({\bf \Lambda},{\bf \Upsilon}\right)}{\partial {\bf \Lambda}}\vert_{\bar{\bf \Lambda},\bar{\bf \Upsilon}}$ (or $\frac{\partial f\left({\bf \Lambda},{\bf \Upsilon}\right)}{\partial {\bf \Upsilon}}\vert_{\bar{\bf \Lambda},\bar{\bf \Upsilon}}$) is the derivative of $f(\cdot)$ with respect to ${\bf \Lambda}$ (or ${\bf \Upsilon}$) at $\bar{\bf \Lambda}$ and $\bar{\bf \Upsilon}$.
Replacing the corresponding derivatives in \eqref{eq11} and simplifying the equation results in \eqref{eq10}.
\end{proof}
\begin{lemma}[\!\cite{soleymani2022improper}]\label{lem-3} 
Consider the arbitrary matrices ${\bf \Lambda}$ and $\bar{{\bf \Lambda}}$, and positive definite matrices ${\bf \Upsilon}$ and $\bar{{\bf \Upsilon}}$. Then, we have:
\begin{multline} 
\ln \left|\mathbf{I}+{\bf \Upsilon}^{-1}{\bf \Lambda}{\bf \Lambda}^H\right|\geq
 \ln \left|\mathbf{I}+{\bf \Upsilon}^{-1}\bar{{\bf \Lambda}}\bar{{\bf \Lambda}}^H\right|
\\-
\text{{\em Tr}}\left(
\bar{{\bf \Upsilon}}^{-1}
\bar{{\bf \Lambda}}\bar{{\bf \Lambda}}^H
\right)
+
2\mathfrak{R}\left\{\text{{\em Tr}}\left(
\bar{{\bf \Upsilon}}^{-1}
\bar{{\bf \Lambda}}{\bf \Lambda}^H
\right)\right\}\\
-
\text{{\em Tr}}\left(
(\bar{{\bf \Upsilon}}^{-1}-(\bar{{\bf \Lambda}}\bar{{\bf \Lambda}}^H + \bar{{\bf \Upsilon}})^{-1})^H({\bf \Lambda}{\bf \Lambda}^H+{\bf \Upsilon})
\right).
\label{lower-bound}
\end{multline}
\end{lemma}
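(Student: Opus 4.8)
The plan is to derive the minorant \eqref{lower-bound} from the mean-square-error (MSE) reformulation of the log-determinant, which replaces the (non-concave) map $\mathbf{\Lambda}\mapsto\ln|\mathbf{I}+\mathbf{\Upsilon}^{-1}\mathbf{\Lambda}\mathbf{\Lambda}^H|$ by a pointwise maximum of expressions that are concave in $\mathbf{\Lambda}$. For an auxiliary ``receive filter'' $\mathbf{U}$, introduce the matrix MSE
\begin{equation*}
\mathbf{E}(\mathbf{U})=\mathbf{I}-\mathbf{U}^H\mathbf{\Lambda}-\mathbf{\Lambda}^H\mathbf{U}+\mathbf{U}^H\left(\mathbf{\Lambda}\mathbf{\Lambda}^H+\mathbf{\Upsilon}\right)\mathbf{U}.
\end{equation*}
Completing the square shows $\mathbf{E}(\mathbf{U})$ is minimized, in the positive-semidefinite order, at $\mathbf{U}^\star=(\mathbf{\Lambda}\mathbf{\Lambda}^H+\mathbf{\Upsilon})^{-1}\mathbf{\Lambda}$, and the matrix inversion lemma gives $\mathbf{E}(\mathbf{U}^\star)=\mathbf{I}-\mathbf{\Lambda}^H(\mathbf{\Lambda}\mathbf{\Lambda}^H+\mathbf{\Upsilon})^{-1}\mathbf{\Lambda}=(\mathbf{I}+\mathbf{\Lambda}^H\mathbf{\Upsilon}^{-1}\mathbf{\Lambda})^{-1}$. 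Together with Sylvester's determinant identity this yields the exact representation $\ln|\mathbf{I}+\mathbf{\Upsilon}^{-1}\mathbf{\Lambda}\mathbf{\Lambda}^H|=\ln|\mathbf{I}+\mathbf{\Lambda}^H\mathbf{\Upsilon}^{-1}\mathbf{\Lambda}|=-\min_{\mathbf{U}}\ln|\mathbf{E}(\mathbf{U})|$.

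Next I would apply two nested relaxations, each producing a valid lower bound. First, drop the minimization over $\mathbf{U}$ by evaluating at the MMSE filter of the reference point, $\bar{\mathbf{U}}=(\bar{\mathbf{\Lambda}}\bar{\mathbf{\Lambda}}^H+\bar{\mathbf{\Upsilon}})^{-1}\bar{\mathbf{\Lambda}}$. Second, since $\mathbf{X}\mapsto-\ln|\mathbf{X}|$ is convex on the positive definite cone, it lies above its first-order Taylor expansion at $\bar{\mathbf{E}}$, the value of $\mathbf{E}(\bar{\mathbf{U}})$ obtained when $\mathbf{\Lambda},\mathbf{\Upsilon}$ are also set to $\bar{\mathbf{\Lambda}},\bar{\mathbf{\Upsilon}}$, for which a short computation gives $\bar{\mathbf{E}}=(\mathbf{I}+\bar{\mathbf{\Lambda}}^H\bar{\mathbf{\Upsilon}}^{-1}\bar{\mathbf{\Lambda}})^{-1}$. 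Writing $\bar{\mathbf{W}}:=\bar{\mathbf{E}}^{-1}=\mathbf{I}+\bar{\mathbf{\Lambda}}^H\bar{\mathbf{\Upsilon}}^{-1}\bar{\mathbf{\Lambda}}\succeq\mathbf{0}$, the two steps chain to
\begin{equation*}
\ln\left|\mathbf{I}+\mathbf{\Upsilon}^{-1}\mathbf{\Lambda}\mathbf{\Lambda}^H\right|\ \geq\ -\ln\left|\mathbf{E}(\bar{\mathbf{U}})\right|\ \geq\ \ln|\bar{\mathbf{W}}|+n-\mathrm{Tr}\!\left(\bar{\mathbf{W}}\,\mathbf{E}(\bar{\mathbf{U}})\right),
\end{equation*}
and the rightmost expression is concave in $\mathbf{\Lambda}$ --- its sole $\mathbf{\Lambda}$-quadratic piece, $\bar{\mathbf{U}}^H\mathbf{\Lambda}\mathbf{\Lambda}^H\bar{\mathbf{U}}$, enters through $-\mathrm{Tr}(\bar{\mathbf{W}}\,\cdot\,)$ with $\bar{\mathbf{W}}\succeq\mathbf{0}$ --- and is tight at $(\bar{\mathbf{\Lambda}},\bar{\mathbf{\Upsilon}})$, which is the tangency property MM requires.

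What remains is to expand $\ln|\bar{\mathbf{W}}|+n-\mathrm{Tr}(\bar{\mathbf{W}}\,\mathbf{E}(\bar{\mathbf{U}}))$ and check that it collapses to the right-hand side of \eqref{lower-bound}. The needed identities, all obtained from the matrix inversion lemma and cyclicity of the trace, are $\ln|\bar{\mathbf{W}}|=\ln|\mathbf{I}+\bar{\mathbf{\Upsilon}}^{-1}\bar{\mathbf{\Lambda}}\bar{\mathbf{\Lambda}}^H|$, $\mathrm{Tr}(\bar{\mathbf{W}})=n+\mathrm{Tr}(\bar{\mathbf{\Upsilon}}^{-1}\bar{\mathbf{\Lambda}}\bar{\mathbf{\Lambda}}^H)$, $\bar{\mathbf{W}}\bar{\mathbf{U}}^H=\bar{\mathbf{\Lambda}}^H\bar{\mathbf{\Upsilon}}^{-1}$, and $\bar{\mathbf{U}}\bar{\mathbf{W}}\bar{\mathbf{U}}^H=\bar{\mathbf{\Upsilon}}^{-1}-(\bar{\mathbf{\Lambda}}\bar{\mathbf{\Lambda}}^H+\bar{\mathbf{\Upsilon}})^{-1}$. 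With these: the two cross terms of $-\mathrm{Tr}(\bar{\mathbf{W}}\mathbf{E}(\bar{\mathbf{U}}))$, namely $-\mathrm{Tr}(\bar{\mathbf{W}}\bar{\mathbf{U}}^H\mathbf{\Lambda})$ and $-\mathrm{Tr}(\bar{\mathbf{W}}\mathbf{\Lambda}^H\bar{\mathbf{U}})$, are complex conjugates of each other and merge into $+2\mathfrak{R}\{\mathrm{Tr}(\bar{\mathbf{\Upsilon}}^{-1}\bar{\mathbf{\Lambda}}\mathbf{\Lambda}^H)\}$; the quadratic term $-\mathrm{Tr}(\bar{\mathbf{W}}\bar{\mathbf{U}}^H(\mathbf{\Lambda}\mathbf{\Lambda}^H+\mathbf{\Upsilon})\bar{\mathbf{U}})$ collapses to $-\mathrm{Tr}((\bar{\mathbf{\Upsilon}}^{-1}-(\bar{\mathbf{\Lambda}}\bar{\mathbf{\Lambda}}^H+\bar{\mathbf{\Upsilon}})^{-1})(\mathbf{\Lambda}\mathbf{\Lambda}^H+\mathbf{\Upsilon}))$, where the matrix in the first factor is Hermitian so that the conjugate transpose written in \eqref{lower-bound} is cosmetic; and $n-\mathrm{Tr}(\bar{\mathbf{W}})$ contributes $-\mathrm{Tr}(\bar{\mathbf{\Upsilon}}^{-1}\bar{\mathbf{\Lambda}}\bar{\mathbf{\Lambda}}^H)$. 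Adding $\ln|\bar{\mathbf{W}}|$ then reproduces the right-hand side of \eqref{lower-bound}.

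I expect the main obstacle to be exactly this last collapse: invoking the matrix inversion lemma several times while keeping the Hermitian transposes and the real-part operator consistent (all matrices are complex-valued), so that the surrogate reduces \emph{exactly} --- not merely up to a convenient approximation --- to the displayed closed form, with equality at the reference point visibly preserved. The earlier steps --- setting up the MSE matrix and the two relaxations --- are routine once one guesses the reference MMSE filter $\bar{\mathbf{U}}$ and weight $\bar{\mathbf{W}}$, and the concavity in $\mathbf{\Lambda}$ that makes the bound useful is read off directly from the final form.
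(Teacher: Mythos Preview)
Your derivation is correct. The paper itself does not prove Lemma~\ref{lem-3}; it merely imports the bound from \cite{soleymani2022improper} and states it without proof. Your route---express $\ln|\mathbf{I}+\mathbf{\Upsilon}^{-1}\mathbf{\Lambda}\mathbf{\Lambda}^H|=\max_{\mathbf{U}}(-\ln|\mathbf{E}(\mathbf{U})|)$ via the MMSE matrix, freeze $\mathbf{U}$ at the reference MMSE filter, then linearize the convex map $-\ln|\cdot|$ at the reference MSE---is exactly the classical WMMSE minorization that underlies this family of bounds, and the four identities you list ($\ln|\bar{\mathbf{W}}|$, $\mathrm{Tr}(\bar{\mathbf{W}})$, $\bar{\mathbf{W}}\bar{\mathbf{U}}^H=\bar{\mathbf{\Lambda}}^H\bar{\mathbf{\Upsilon}}^{-1}$, and $\bar{\mathbf{U}}\bar{\mathbf{W}}\bar{\mathbf{U}}^H=\bar{\mathbf{\Upsilon}}^{-1}-(\bar{\mathbf{\Lambda}}\bar{\mathbf{\Lambda}}^H+\bar{\mathbf{\Upsilon}})^{-1}$) are precisely what one needs to collapse the expansion to the displayed form. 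Your observation that the Hermitian transpose in the last trace is cosmetic is also right: $\bar{\mathbf{\Upsilon}}^{-1}-(\bar{\mathbf{\Lambda}}\bar{\mathbf{\Lambda}}^H+\bar{\mathbf{\Upsilon}})^{-1}$ is Hermitian and in fact positive semidefinite, which is what makes the quadratic-in-$\mathbf{\Lambda}$ term concave.

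One minor point worth flagging: the first term on the right-hand side of \eqref{lower-bound} as printed reads $\ln|\mathbf{I}+\mathbf{\Upsilon}^{-1}\bar{\mathbf{\Lambda}}\bar{\mathbf{\Lambda}}^H|$ with an unbarred $\mathbf{\Upsilon}$, whereas your derivation (correctly) produces $\ln|\mathbf{I}+\bar{\mathbf{\Upsilon}}^{-1}\bar{\mathbf{\Lambda}}\bar{\mathbf{\Lambda}}^H|$. This is evidently a typographical slip in the lemma statement; the application in \eqref{eq-22} uses the barred version, and only the barred version gives a constant first term as required for an MM surrogate.
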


Upon employing the concave lower bounds in Lemma \ref{lem-2} and Lemma \ref{lem-3}, we can obtain a concave lower bound for the FBL rates with the NA approximation as presented in the following lemma.
\begin{lemma}\label{lem-4}
A concave lower bound for $r_{lk}$ is given by 
\begin{multline}
\label{eq24}
r_{lk}\geq \tilde{r}_{lk}= a_{lk}
+2\sum_{ij}\mathfrak{R}\left\{\text{{\em Tr}}\left(
{\bf A}_{lk,ij}\mathbf{Q}_{ij}^H
\bar{\mathbf{H}}_{lk,i}^H\right)\right\}
\\
-
\text{{\em Tr}}\left(
{\bf B}_{lk}(\mathbf{H}_{lk,l}\mathbf{Q}_{lk}\mathbf{Q}_{lk}^H\mathbf{H}_{lk,l}^H+\mathbf{D}_{lk})
\right)
\end{multline}
where 
\begin{align*}
a_{lk}&\!=\!\ln\left|\!{\bf I}\!+\!\bar{\bf D}_{lk}^{-1}\bar{\bf S}_{lk}\!\right|\!\!
-\!\!
\text{{\em Tr}}\!\left(
\bar{\bf D}_{lk}^{-1}\bar{\bf S}_{lk}
\right)\!
\!-\!\frac{Q^{-1}\!(\epsilon)\!({\bar{V}_{lk}}\!\!+\!2I
)}{2\sqrt{n_t\bar{V}_{lk}}},
\\
{\bf A}_{lk,ij}\!&\!=\!\!\left\{\!\!\!\! \begin{array}{ll}
\bar{\mathbf{D}}^{-1}_{lk}\bar{\mathbf{H}}_{lk,l}\bar{\mathbf{Q}}_{lk}
&\!
\text{\em if}
\,\,i=l,
\,j=k,\\
\frac{Q^{-1}(\epsilon)}{\sqrt{n_t\bar{V}_{lk}}}
(\bar{\mathbf{S}}_{lk}+ \bar{\mathbf{D}}_{lk})^{-1}\bar{\mathbf{H}}_{lk,i}\bar{\mathbf{Q}}_{ij}
&
\!
\text{\em otherwise},
\end{array}\right.
\\
{\bf B}_{lk}&\!=\bar{\mathbf{D}}^{-1}_{lk}\!-\!(\bar{\mathbf{S}}_{lk}+ \bar{\mathbf{D}}_{lk})^{-1}\!\!
\\
&\hspace{.6cm}
+\!
\frac{Q^{-1}(\epsilon)}{\sqrt{n_t\bar{V}_{lk}}}
(\bar{\mathbf{S}}_{lk}+ \bar{\mathbf{D}}_{lk})^{-1}
\bar{\mathbf{D}}_{lk}(\bar{\mathbf{S}}_{lk}+ \bar{\mathbf{D}}_{lk})^{-1}\!,
\end{align*} 
where $\bar{\mathbf{D}}_{lk}$, 
$\bar{\mathbf{S}}_{lk}$, $\bar{\mathbf{Q}}_{ij}$, $\bar{V}_{lk}$, and $\bar{\mathbf{H}}_{lk,i}$, $\forall l,k,i,j$ are, respectively, the initial values of ${\mathbf{D}}_{lk}$, ${\mathbf{S}}_{lk}$, ${\mathbf{Q}}_{ij}$, $V_{lk}$, and ${\mathbf{H}}_{lk,i}$ at the current step, which are obtained upon replacing $\{\mathbf{P}\}$ by $\{\mathbf{P}^{(t-1)}\}$ and $\{\bf{\Psi}\}$ by $\{{\bf\Psi}^{(t-1)}\}$.
\end{lemma}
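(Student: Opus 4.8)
The plan is to build the surrogate $\tilde r_{lk}$ term by term, matching each summand of $r_{lk}$ in \eqref{1-multi} against a bound already established. Write the Shannon rate as $\log|{\bf I}+{\bf D}^{-1}_{lk}{\bf S}_{lk}|$ with ${\bf S}_{lk}={\bf H}_{lk,l}{\bf Q}_{lk}{\bf Q}_{lk}^H{\bf H}_{lk,l}^H$, and the dispersion (via \eqref{eq=12}) as $V_{lk}=2\,\mathrm{Tr}\big({\bf S}_{lk}({\bf D}_{lk}+{\bf S}_{lk})^{-1}\big)$. The subtracted term in $r_{lk}$ is $-Q^{-1}(\epsilon)\sqrt{V_{lk}/n_t}$; since $\sqrt{\cdot}$ is concave, I would first linearize it: $-\sqrt{V_{lk}/n_t}\ge -\frac{1}{2\sqrt{n_t\bar V_{lk}}}V_{lk} - \frac{1}{2}\sqrt{\bar V_{lk}/n_t}$, which already explains the appearance of $1/\sqrt{n_t\bar V_{lk}}$ as a prefactor throughout and the constant $-\frac{Q^{-1}(\epsilon)\bar V_{lk}}{2\sqrt{n_t\bar V_{lk}}}$ inside $a_{lk}$. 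So after this step it suffices to produce a concave lower bound for $\log|{\bf I}+{\bf D}^{-1}_{lk}{\bf S}_{lk}|$ and a concave lower bound for $-\frac{Q^{-1}(\epsilon)}{\sqrt{n_t\bar V_{lk}}}\cdot 2\,\mathrm{Tr}\big({\bf S}_{lk}({\bf D}_{lk}+{\bf S}_{lk})^{-1}\big)$, both in the variables $\{{\bf Q}_{ij}\}$.

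For the Shannon rate, I would identify ${\bf \Lambda}\leftarrow {\bf H}_{lk,l}{\bf Q}_{lk}$ and ${\bf \Upsilon}\leftarrow {\bf D}_{lk}$ and apply Lemma \ref{lem-3} directly; the three-term structure on the right of \eqref{lower-bound} is exactly what populates the $\mathrm{Tr}(\bar{\bf D}_{lk}^{-1}\bar{\bf S}_{lk})$ and the linear-in-${\bf Q}_{lk}^H$ pieces of $a_{lk}$ and ${\bf A}_{lk,lk}$, and the $-\mathrm{Tr}\big((\bar{\bf D}_{lk}^{-1}-(\bar{\bf S}_{lk}+\bar{\bf D}_{lk})^{-1})({\bf S}_{lk}+{\bf D}_{lk})\big)$ term contributes the first two summands of ${\bf B}_{lk}$. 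For the dispersion term, note $\mathrm{Tr}\big({\bf S}_{lk}({\bf D}_{lk}+{\bf S}_{lk})^{-1}\big)=I-\mathrm{Tr}\big({\bf D}_{lk}({\bf S}_{lk}+{\bf D}_{lk})^{-1}\big)$, so up to the constant $I$ we need an \emph{upper} bound (convex in the right sense, so that its negative is concave) for $-\mathrm{Tr}\big({\bf D}_{lk}({\bf S}_{lk}+{\bf D}_{lk})^{-1}\big)$, equivalently a lower bound for $\mathrm{Tr}\big({\bf D}_{lk}({\bf S}_{lk}+{\bf D}_{lk})^{-1}\big)$. Here I would invoke Lemma \ref{lem-2} with ${\bf \Upsilon}\leftarrow {\bf S}_{lk}+{\bf D}_{lk}$ and an appropriate ${\bf \Lambda}$ chosen so that ${\bf \Lambda}{\bf \Lambda}^H={\bf D}_{lk}$ (e.g. the full stack of interfering/noise contributions, which is why the sum in \eqref{eq24} runs over \emph{all} $ij\neq(l,k)$ through ${\bf H}_{lk,i}{\bf Q}_{ij}$, plus a noise piece absorbed into constants): the quadratic penalty $-\mathrm{Tr}(\bar{\bf \Upsilon}^{-1}\bar{\bf \Lambda}\bar{\bf \Lambda}^H\bar{\bf \Upsilon}^{-1}{\bf \Upsilon})$ becomes $-\frac{Q^{-1}(\epsilon)}{\sqrt{n_t\bar V_{lk}}}\mathrm{Tr}\big((\bar{\bf S}_{lk}+\bar{\bf D}_{lk})^{-1}\bar{\bf D}_{lk}(\bar{\bf S}_{lk}+\bar{\bf D}_{lk})^{-1}({\bf S}_{lk}+{\bf D}_{lk})\big)$, i.e. the third summand of ${\bf B}_{lk}$ together with its contribution to the constant $-\frac{Q^{-1}(\epsilon)\cdot 2I}{2\sqrt{n_t\bar V_{lk}}}$ in $a_{lk}$; the linear term $2\mathfrak R\{\mathrm{Tr}(\bar{\bf \Upsilon}^{-1}\bar{\bf \Lambda}{\bf \Lambda}^H)\}$ supplies the $i\neq l$ or $j\neq k$ branch of ${\bf A}_{lk,ij}$. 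Finally I would collect the two bounds, verify that the constant, linear, and quadratic pieces aggregate into exactly $a_{lk}$, $\sum_{ij}2\mathfrak R\{\mathrm{Tr}({\bf A}_{lk,ij}{\bf Q}_{ij}^H\bar{\bf H}_{lk,i}^H)\}$, and $-\mathrm{Tr}({\bf B}_{lk}({\bf S}_{lk}+{\bf D}_{lk}))$, and note that ${\bf S}_{lk}+{\bf D}_{lk}={\bf H}_{lk,l}{\bf Q}_{lk}{\bf Q}_{lk}^H{\bf H}_{lk,l}^H+{\bf D}_{lk}$ as written in \eqref{eq24}, where ${\bf D}_{lk}$ is itself a sum of terms $\sigma^2{\bf I}+\sum_{(i,j)\neq(l,k)}{\bf H}_{lk,i}{\bf Q}_{ij}{\bf Q}_{ij}^H{\bf H}_{lk,i}^H$, so the whole expression is jointly quadratic-concave in $\{{\bf Q}_{ij}\}$ since it is a sum of a concave logarithm surrogate, affine terms, and $-\mathrm{Tr}({\bf B}_{lk}\cdot(\text{PSD quadratic}))$ with ${\bf B}_{lk}\succeq{\bf 0}$.

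The main obstacle I anticipate is the bookkeeping around the dispersion term: one must (i) confirm that ${\bf B}_{lk}\succeq{\bf 0}$ so that $-\mathrm{Tr}\big({\bf B}_{lk}({\bf S}_{lk}+{\bf D}_{lk})\big)$ is genuinely concave — this needs $\bar{\bf D}_{lk}^{-1}-(\bar{\bf S}_{lk}+\bar{\bf D}_{lk})^{-1}\succeq{\bf 0}$ (true since $\bar{\bf S}_{lk}\succeq{\bf 0}$) plus the manifestly PSD third term — and (ii) track carefully how the single matrix ${\bf D}_{lk}$ appears simultaneously inside the Shannon surrogate and inside the dispersion surrogate, so that the quadratic penalties on the interfering covariances $\{{\bf Q}_{ij}\}_{(i,j)\neq(l,k)}$ combine correctly rather than double-counting. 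I expect the verification of tightness at the expansion point, $\tilde r_{lk}\big|_{\{{\bf Q}={\bar{\bf Q}}\}}=r_{lk}\big|_{\{{\bf Q}=\bar{\bf Q}\}}$, to follow automatically once the two component bounds are checked to be tight at their own expansion points and the $\sqrt{\cdot}$ linearization is tight at $\bar V_{lk}$.
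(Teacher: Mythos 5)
Your proposal is correct and follows essentially the same route as the paper's proof: linearize $\sqrt{V_{lk}}$ at $\bar V_{lk}$ via concavity of the square root, apply Lemma \ref{lem-3} to the Shannon term with ${\bf \Lambda}\leftarrow{\bf H}_{lk,l}{\bf Q}_{lk}$, ${\bf \Upsilon}\leftarrow{\bf D}_{lk}$, apply Lemma \ref{lem-2} to $V_{lk}=2\,\mathrm{Tr}({\bf I})-2\,\mathrm{Tr}({\bf D}_{lk}({\bf S}_{lk}+{\bf D}_{lk})^{-1})$ with ${\bf \Upsilon}\leftarrow{\bf S}_{lk}+{\bf D}_{lk}$ and ${\bf \Lambda}{\bf \Lambda}^H\leftarrow{\bf D}_{lk}$, and aggregate. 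Your explicit verification that ${\bf B}_{lk}\succcurlyeq{\bf 0}$ (hence genuine concavity of the quadratic penalty) and your attention to the $\sigma^2{\bf I}$ block inside the Lemma \ref{lem-2} cross term are welcome additions that the paper leaves implicit.
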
  
\begin{proof}
Upon employing Lemma \ref{lem-3}, a concave lower bound can be obtained for the first-order Shannon rate  as
\begin{multline} \label{eq-22}
\left|{\bf I}+{\bf D}_{lk}^{-1}{\bf S}_{lk}\right|\geq \ln\left|{\bf I}+\bar{\bf D}_{lk}^{-1}\bar{\bf S}_{lk}\right|
-
\text{{Tr}}\left(
\bar{\bf D}_{lk}^{-1}\bar{\bf S}_{lk}
\right)
\\
+
2\mathfrak{R}\left\{\text{{ Tr}}\left(
\bar{\mathbf{Q}}^H_{lk}\bar{\mathbf{D}}^{-1}_{lk}\mathbf{Q}_{lk}
\right)\right\}
\\
-\!
\text{{ Tr}}\!\left(\!
(\bar{\mathbf{D}}^{-1}_{lk}\!-(\bar{\mathbf{S}}_{lk}\!+\! \bar{\mathbf{D}}_{lk})^{-1})^H(\bar{\bf H}_{lk,l}\mathbf{Q}_{lk}\mathbf{Q}^H_{lk}\bar{\bf H}_{lk,l}^H\!+\!\mathbf{D}_{lk})\!
\right)\!.
\end{multline}
Next, we obtain a concave lower bound for $-Q^{-1}(\epsilon)\sqrt{\frac{V_{lk}}{n_t}}$, which is equivalent to obtaining a convex upper bound for $\sqrt{V_{lk}}$. 
To this end, we first employ the following inequality 
\begin{equation}
\sqrt{V_{lk}}\leq \frac{\sqrt{\bar{V}_{lk}}}{2}+\frac{V_{lk}}{2\sqrt{\bar{V}_{lk}}},
\end{equation}
which is non-convex since $V_{lk}$ is not convex in $\{{\bf Q}\}$.
Upon employing Lemma \ref{lem-2}, a convex upper bound for $V_{lk}(\cdot)$ can be obtained as 
\begin{multline} \label{eq-24}
V_{lk}\leq 2\text{{Tr}}({\bf I})-
4
\sum_{[ij]\neq[lk]}\mathfrak{R}\left\{\text{{Tr}}\left(
{\bf A}_{lk,ij}\mathbf{Q}_{ij}^H
\bar{\mathbf{H}}_{lk,i}^H\right)\right\}
\\
+
2\text{{Tr}}\left(
((\bar{\mathbf{S}}_{lk}+ \bar{\mathbf{D}}_{lk})^{-1})^{-1}
\bar{\mathbf{D}}_{lk}(\bar{\mathbf{S}}_{lk}+ \bar{\mathbf{D}}_{lk})^{-1})^{-1}\right.
\\
\times\left.
(
\mathbf{H}_{lk,l}\mathbf{Q}_{lk}(\mathbf{H}_{lk,l}\mathbf{Q}_{lk})^H\!\!\!+\!\mathbf{D}_{lk})
\right),
\end{multline}
where $[ij]\neq[lk]$ includes all possible $i$, $j$ pairs, except for the case where $i=l$ and simultaneously $j=k$.
Substituting the concave lower bound in \eqref{eq-22} and the convex upper bound in \eqref{eq-24} into the FBL rate expression proves the lemma. 
\end{proof}
\begin{remark}
The concave lower bound in Lemma \ref{lem-3} is quadratic in $\{{\bf Q}\}$, and consists of a constant term, a linear/affine term, and a quadratic term.
\end{remark} 
We denote the surrogate functions for $f_i$ by $\tilde{f}_i$, which are obtained by substituting the concave lower bounds $\tilde{r}_{lk}$ in \eqref{opt}.
For instance, if ${f}_i$ is equal to the sum rate, then $\tilde{f}_i=\sum_{\forall lk} \tilde{r}_{lk}$.
Moreover, if ${f}_i$ represents the EE of U$_{lk}$, then we have 
\begin{equation}\label{eq31}
\tilde{f}_i=\tilde{e}_{lk}=\frac{\tilde{r}_{lk}}{P_c+\eta\text{Tr}\left(\mathbf{Q}_{lk}\mathbf{Q}_{lk}^H\right)},
\end{equation}
which is a fractional function of $\{{\bf Q}\}$ with a concave numerator and convex denominator. 
 {Moreover, if ${f}_i$ is a function of the transmission delay, then we have $\tilde{f}_i=-\frac{n_t}{\tilde{r}_{lk}}$.}
Note that although the surrogate lower bounds for the rates in Lemma \ref{lem-4} are concave, the $\tilde{f}_i$s are not necessarily concave, since they might be a linear function of the EE metrics. Substituting the $f_i$s by the $\tilde{f}_i$s leads to 
\begin{subequations}\label{opt-sur}
\begin{align}
 \underset{\{\mathbf{Q}\}
 }{\max}\, \,\, & 
  \tilde{f}_0\!\left(\left\{\mathbf{Q}\right\}\!,\!\{{\bf\Psi}^{(t-1)}\}\right)\!\!\!\! \\
  \text{s.t.} \,\,  \,&  \tilde{f}_i\left(\left\{\mathbf{Q}\right\}\!,\!\{{\bf\Psi}^{(t-1)}\}\right)\geq0,\,\forall i,
\\
\label{32-c}
 &
\tilde{r}_{lk}\geq r^{th},\,\,\,\,\forall l,k,
\\
\label{32-d}
&
\sum_{k}\text{Tr}\left(\mathbf{Q}_{lk}\mathbf{Q}_{lk}^H\right)\leq p_l, \forall l.
 \end{align}
\end{subequations}
The optimization problem \eqref{opt-sur} is convex for the maximization of the minimum and/or sum rates.
Hence, it can be efficiently solved by existing numerical tools. 
 {Note that our framework
can also optimize other SE metrics, such as the geometric
mean of users. As shown in \cite{yu2021maximizing}, the maximization of the geometric mean of the users can be solved by solving a sequence of weighted sum rate maximization problems, which can be efficiently handled by our framework.} 

Unfortunately, \eqref{opt-sur} is non-convex for GEE maximization as well as for the maximization of the minimum weighted EE of the users, since the EE and/or GEE functions are not concave in $\{{\bf Q}\}$. Fortunately, a solution of the minimum weighted EE of the users and/or GEE maximization problems can be obtained by Dinkelbach-based algorithms, since the numerator of $\tilde{e}_{lk}$ (or $\tilde{g}$) is concave, while its denominator is  convex.  {The problem in \eqref{opt-sur} is not convex, when minimizing the maximum transmission delay of the users, which can be considered as a latency metric from a physical layer point of view. In the following, we solve \eqref{opt-sur} for the maximization of the minimum weighted EE of the users, the maximization of the GEE and the minimization of the maximum transmission delay.} 
\subsubsection{Maximization of the Minimum EE}
In this case, \eqref{opt-sur} can be written as  
\begin{subequations}\label{opt-sur2}
\begin{align}
 \underset{\{\mathbf{Q}\},e
 }{\max}\, \,\, & 
 e, 
&
  \text{s.t.} \,\,  \,&  \tilde{e}_{lk}=\frac{\tilde{r}_{lk}}{P_c+\eta\text{Tr}\left(\mathbf{Q}_{lk}\mathbf{Q}_{lk}^H\right)}\geq e,\,\forall i,
\\
&&&\eqref{32-c},\eqref{32-d}.
 \end{align}
\end{subequations}
Upon employing the generalized Dinkelbach algorithm (GDA), we can derive the globally optimal solution of \eqref{opt-sur2} by iteratively solving the convex optimization problem \cite{zappone2015energy}
\begin{subequations}\label{opt-sur3}
\begin{align}
 \underset{\{\mathbf{Q}\},e
 }{\max} \,\, & 
 e,\!\!\!\! 
&
  \text{s.t.} \,\,  &  {\tilde{r}_{lk}}\!-\!\mu^{(n)}\left(P_c\!+\!\eta\text{Tr}\left(\mathbf{Q}_{lk}\mathbf{Q}_{lk}^H\right)\right)\!\geq e,\,\forall i,
\\
&&&\eqref{32-c},\eqref{32-d},
 \end{align}
\end{subequations}
and updating $\mu^{(n)}$ as 
\begin{equation}\label{mu}
\mu^{(n)}\!\!=\!\min_{lk}\!\!\left\{\!\tilde{e}^{(n-1)}_{lk}\!\right\}\!\!=\!\min_{lk}\!\!\left\{\!\!\frac{\tilde{r}_{lk}\left(\mathbf{Q}_{lk}^{(n-1)}\right)}{P_c\!+\!\eta\text{Tr}\left(
\!\!\mathbf{Q}_{lk}^{(n-1)}\!\mathbf{Q}_{lk}^{(n-1)^H}\!
\right)}\!\!\right\}\!.
\end{equation}
Note that $n$ is the number of iterations in the inner loop, i.e., the number of GDA iterations. 

\subsubsection{Maximization of the GEE} 
In this case, \eqref{opt-sur} is equivalent to 
\begin{align}
\label{opt-sur-gee}
 \underset{\{\mathbf{Q}\}
 }{\max}\, \,\, & 
\frac{\sum_{l,k}\tilde{r}_{lk}}{\sum_{l,k}\left(P_c+\eta\text{Tr}\left(\mathbf{Q}_{lk}\mathbf{Q}_{lk}^H\right)\right)}
&
  \text{s.t.}\,\,
 &\eqref{32-c},\eqref{32-d}.
\end{align}
Employing the Dinkelbach algorithm, a globally optimal solution of \eqref{opt-sur-gee} can be found by iteratively solving 
\cite{zappone2015energy}
\begin{align*}
 \underset{\{\mathbf{Q}\}
 }{\max}\, & 
 \sum_{l,k}\!\tilde{r}_{lk}\!-\!\mu^{(n)}
\sum_{l,k}\left(P_c+\eta\text{Tr}\left(\mathbf{Q}_{lk}\mathbf{Q}_{lk}^H\right)
\right) \!\!
&
  \text{s.t.}  \,&\eqref{32-c},\eqref{32-d}, 
 \end{align*}
and updating $\mu^{(n)}$ as 
\begin{equation*}
\mu^{(n)}\!\!=\tilde{g}^{(n-1)}_{lk}=\!\frac{\sum_{l,k}\tilde{r}_{lk}\left(\mathbf{Q}_{lk}^{(n-1)}\right)}{
\sum_{l,k}\!\left(\!P_c\!+\!\eta\text{Tr}\left(\mathbf{Q}_{lk}^{(n-1)}\mathbf{Q}_{lk}^{(n-1)^H}\right)
\!\right)}\!.
\end{equation*} 
\subsubsection{ {Minimization of the maximum transmission delay}}
 {To minimize the maximum delay, we have to solve
\begin{align}
\label{opt-sur-delay}
 \underset{\{\mathbf{Q}\}
 }{\min}\,  & 
\max_{\forall lk} \left\{\frac{n_t}{\tilde{r}_{lk}}\right\}
&
  \text{s.t.}\,\,
 &\eqref{32-c},\eqref{32-d},
\end{align}
which is equivalent to maximizing the minimum rate of users as
\begin{align}
\label{opt-sur-delay2}
 \underset{\{\mathbf{Q}\}
 }{\max}\,  & 
\min_{\forall lk} \left\{\tilde{r}_{lk}\right\}
&
  \text{s.t.}\,\,
 &\eqref{32-c},\eqref{32-d},
\end{align}
which is a convex optimization problem.
Note that the transmission delay $d_{lk}$ is a monotonically decreasing function of $r_{lk}$, and it is minimized when $r_{lk}$ is maximized, which makes the solution of \eqref{opt-sur-delay} equivalent to \eqref{opt-sur-delay2}.
}
\subsubsection{Discussion on Single-stream Data Transmission}
 {In this case, the rank of  matrix ${\bf P}_{lk}$ for all $l,k$ is equal to one. This means that matrix ${\bf P}_{lk}$ can be written as ${\bf P}_{lk}={\bf q}_{lk}{\bf q}_{lk}^H$, where ${\bf q}_{lk}\in\mathbb{C}^{ N_{BS,l}\times 1}$.
In other words, when single-stream data transmission is employed, the BSs perform only beamforming to transmit data, and we have to optimize the beamforming vectors instead of transmit covariance matrices. Therefore, the computational complexity of  single-stream data transmission is lower than  that of employing multiple streams. 
However, this lower computational complexity is attained at the cost of a significant performance loss, especially when the maximum number of streams, i.e., {$\min(N_{BS,l},N_{u,lk})$} for U$_{lk}$, increases. Indeed, as the network size increases, more advanced transmission and resource allocation techniques, involving higher complexities, are needed to provide  satisfactory performance.}

 {
Note that if the transmitter and/or the receiver are equipped with only a single antenna, we are restricted to single-stream data transmissions. Obviously, the single-stream scheme becomes increasingly suboptimal, when the maximum number of streams grows. To derive single-stream data transmission solutions, we only have to replace matrices ${\bf Q}_{lk}$ by  vectors ${\bf q}_{lk}$ and employ the lower bounds in Lemma \ref{lem-4}. Indeed, the schemes proposed in this subsection can be applied for both single- and multiple-stream data transmission.  
 }
\subsection{Optimizing the RIS Elements}
Now, we update $\{\bf{\Psi}\}$ by solving \eqref{opt} for fixed
$\{{\bf P}^{(t)}\}$, i.e.,  
\begin{subequations}\label{opt-t}
\begin{align}
 \underset{\{\bf{\Psi}\}\in\mathcal{T}
 }{\max}\,\,\,  & 
  f_0\!\left(\left\{\mathbf{P}^{(t)}\right\}\!,\!\{\bf{\Psi}\}\right)\!\!\!\! \\
  \text{s.t.}\,\,   \,&  f_i\left(\left\{\mathbf{P}^{(t)}\right\}\!,\!\{\bf{\Psi}\}\right)\geq0,\,\forall i,
\\
 &
r_{lk}\geq r^{th},\,\,\,\,\forall l,k,
 \end{align}
\end{subequations}
which is non-convex since the rates are not concave in $\{\bf{\Psi}\}$ and the set $\mathcal{T}$ might be non-convex. 
 {Note that as discussed in Section \ref{sec-iii-a}, the minimization of the maximum transmission delay is equivalent to the maximization of the minimum rate. Hence, our focus in this subsection is on the SE and EE metrics.} In the following, we first consider reflective RISs and then, describe how the solution can be applied to the more sophisticated STAR-RIS, utilizing the MS scheme.  
\subsubsection{ {Reflective RISs}}
To find a suboptimal solution for \eqref{opt-t}, we leverage an approach based on MM. Specifically, we first obtain a suitable concave lower bound for the rates and then, convexify  $\mathcal{T}$ if it is not already a convex set. 
Since the rates have similar structures in $\{{\bf Q}\}$ and in $\{\bf{\Psi}\}$, we can utilize the concave lower bounds in Lemma \ref{lem-4} to attain concave lower bounds for the rates and construct suitable surrogate optimization problems for updating  $\{\bf{\Psi}\}$.
In the subsequent corollary, we provide the concave lower bounds for the rates.The proof of this corollary closely resembles that of Lemma \ref{lem-4}, and thus, we omit it here. 

\begin{corollary}\label{cor-1}
A concave lower bound for $r_{lk}$ is given by 
\begin{multline}
\label{eq34}
r_{lk}\geq \hat{r}_{lk}= a_{lk}
+2\sum_{ij}\mathfrak{R}\left\{\text{{\em Tr}}\left(
{\bf A}_{lk,ij}\mathbf{Q}_{ij}^{(t)^H}
{\mathbf{H}}_{lk,i}^H\right)\right\}
\\
-
\text{{\em Tr}}\left(
{\bf B}_{lk}(\mathbf{H}_{lk,l}\mathbf{Q}_{lk}^{(t)}(\mathbf{H}_{lk,l}\mathbf{Q}_{lk}^{(t)})^H+\mathbf{D}_{lk})
\right),
\end{multline}
where the constant parameters $a_{lk}$, ${\bf A}_{lk,ij}$, and ${\bf B}_{lk}$ are defined as in Lemma \ref{lem-4}.
\end{corollary}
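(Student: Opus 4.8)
The plan is to mirror the proof of Lemma~\ref{lem-4}, exploiting the fact that the FBL rate $r_{lk}$ is a function of the channel matrices $\{\mathbf{H}_{lk,i}\}$ of exactly the same type as it is of the transmit factors $\{\mathbf{Q}_{ij}\}$. Indeed, $\mathbf{S}_{lk}=(\mathbf{H}_{lk,l}\mathbf{Q}_{lk})(\mathbf{H}_{lk,l}\mathbf{Q}_{lk})^H$ and $\mathbf{D}_{lk}=\sigma^2\mathbf{I}+\sum_{i\ne l}\mathbf{H}_{lk,i}\mathbf{P}_i^{(t)}\mathbf{H}_{lk,i}^H+\sum_{j\ne k}\mathbf{H}_{lk,l}\mathbf{P}_{lj}^{(t)}\mathbf{H}_{lk,l}^H$, so that $C_{lk}=\log|\mathbf{I}+\mathbf{D}_{lk}^{-1}\mathbf{S}_{lk}|$ and the matrix dispersion of Lemma~\ref{lem-1} depend on $\mathbf{H}$ and $\mathbf{Q}$ only through the products $\mathbf{H}_{lk,i}\mathbf{Q}_{ij}$. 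Fixing $\{\mathbf{Q}^{(t)}\}$ and regarding $\{\mathbf{H}_{lk,i}\}$ as the free matrices therefore yields precisely the expressions bounded in Lemma~\ref{lem-4}, but with the roles of the two factors of each product interchanged. Since the channels $\mathbf{H}_{lk,i}(\{\bf{\Psi}\})$ are affine maps of the RIS coefficients and concavity is preserved under affine precomposition, a concave lower bound in $\{\mathbf{H}_{lk,i}\}$ immediately yields a concave lower bound in $\{\bf{\Psi}\}$.

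Concretely, I would carry out the same three steps as in the proof of Lemma~\ref{lem-4}. First, apply Lemma~\ref{lem-3} to lower-bound the Shannon part $\log|\mathbf{I}+\mathbf{D}_{lk}^{-1}\mathbf{S}_{lk}|$, now holding $\mathbf{Q}_{ij}^{(t)}$ fixed and linearising with respect to the $\mathbf{H}_{lk,i}$ about $\bar{\mathbf{H}}_{lk,i}=\mathbf{H}_{lk,i}(\{{\bf\Psi}^{(t-1)}\})$, using $|\mathbf{I}+\mathbf{AB}|=|\mathbf{I}+\mathbf{BA}|$ to cast the logarithm into the template of Lemma~\ref{lem-3}. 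Second, bound the dispersion penalty $-Q^{-1}(\epsilon)\sqrt{V_{lk}/n_t}$ from below via the same scalar inequality $\sqrt{V_{lk}}\le \tfrac{1}{2}\sqrt{\bar V_{lk}}+\tfrac{1}{2\sqrt{\bar V_{lk}}}V_{lk}$ followed by Lemma~\ref{lem-2} applied to the matrix form of $V_{lk}$ in Lemma~\ref{lem-1}, again treating $\mathbf{H}_{lk,i}$ as the variable. Third, add the two bounds using $r_{lk}=C_{lk}-Q^{-1}(\epsilon)\sqrt{V_{lk}/n_t}$ and collect terms; the constant, the coefficients of the cross traces $\text{Tr}(\mathbf{A}_{lk,ij}(\mathbf{Q}_{ij}^{(t)})^H\mathbf{H}_{lk,i}^H)$, and the matrix in the quadratic term $-\text{Tr}(\mathbf{B}_{lk}(\mathbf{H}_{lk,l}\mathbf{Q}_{lk}^{(t)}(\mathbf{H}_{lk,l}\mathbf{Q}_{lk}^{(t)})^H+\mathbf{D}_{lk}))$ come out identical to $a_{lk}$, $\mathbf{A}_{lk,ij}$ and $\mathbf{B}_{lk}$ of Lemma~\ref{lem-4}, now evaluated at $\bar{\mathbf{Q}}_{ij}=\mathbf{Q}_{ij}^{(t)}$, which is exactly the claim.

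The only point that warrants care, beyond the routine transcription of the algebra, is concavity of the resulting expression in $\{\bf{\Psi}\}$. Both $\mathbf{S}_{lk}$ and $\mathbf{D}_{lk}$ are positive-semidefinite-weighted sums of outer-product quadratics in the channels, so $\text{Tr}(\mathbf{B}_{lk}(\mathbf{S}_{lk}+\mathbf{D}_{lk}))$ is a convex quadratic in $\{\mathbf{H}_{lk,i}\}$ as soon as $\mathbf{B}_{lk}\succeq\mathbf{0}$; the latter holds because $\bar{\mathbf{D}}_{lk}^{-1}-(\bar{\mathbf{S}}_{lk}+\bar{\mathbf{D}}_{lk})^{-1}\succeq\mathbf{0}$ (since $\bar{\mathbf{S}}_{lk}\succeq\mathbf{0}$), while the remaining term is manifestly positive semidefinite for $\epsilon<1/2$. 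Negating this convex quadratic and adding the affine pieces gives a concave function of $\{\mathbf{H}_{lk,i}\}$, and composing with the affine channel map $\{\bf{\Psi}\}\mapsto\mathbf{H}_{lk,i}$ preserves concavity; this is the main obstacle only in the sense that it is the single non-mechanical check. Since every remaining step is a verbatim copy of the proof of Lemma~\ref{lem-4} with $\mathbf{H}$ and $\mathbf{Q}$ swapped, the detailed computation can be omitted, as stated.
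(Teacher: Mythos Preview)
Your proposal is correct and follows exactly the approach the paper itself takes: the paper states that ``the proof of this corollary closely resembles that of Lemma~\ref{lem-4}, and thus, we omit it here,'' and your argument makes this precise by swapping the roles of $\mathbf{Q}$ and $\mathbf{H}$ in the three steps of Lemma~\ref{lem-4} and invoking affinity of $\mathbf{H}_{lk,i}$ in $\{\bf{\Psi}\}$. Your explicit verification that $\mathbf{B}_{lk}\succeq\mathbf{0}$ (hence that the quadratic term is convex) is a useful addition that the paper leaves implicit.
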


Substituting the surrogate functions for the rates, i.e., the $\hat{r}_{lk}$s, in \eqref{opt-t} yields
\begin{subequations}\label{opt-t-sur}
\begin{align}
 \underset{\{\bf{\Psi}\}\in\mathcal{T}
 }{\max}\,\,\,  & 
  \hat{f}_0\!\left(\left\{\mathbf{P}^{(t)}\right\}\!,\!\{\bf{\Psi}\}\right)\!\!\!\! \\
  \text{s.t.}\,\,   \,&  \hat{f}_i\left(\left\{\mathbf{P}^{(t)}\right\}\!,\!\{\bf{\Psi}\}\right)\geq0,\,\forall i,
\label{40-b}
\\
\label{40-c}
 &
\hat{r}_{lk}\left(\left\{\mathbf{P}^{(t)}\right\}\!,\!\{\bf{\Psi}\}\right)\geq r^{th},\,\,\,\,\forall l,k,
 \end{align}
\end{subequations}
which is convex if $\mathcal{T}$ is a convex set, i.e. when $\mathcal{T}_U$ is considered. For $\mathcal{T}_U$, the proposed scheme achieves convergence to a stationary point of \eqref{opt}. Note that the surrogate functions $\hat{f}_i\left(\left\{\mathbf{P}^{(t)}\right\}\!,\!\{\bf{\Psi}\}\right)$ are concave in $\{{\bf\Psi}\}$ even if they are linear functions of the EE metrics. The reason is that the powers (transmit covariance matrices) are fixed, and thus, the EE metrics are not fractional functions of $\{{\bf\Psi}\}$. 

Now, we convexify $\mathcal{T}_I$. The unit modulus constraint $|\psi_{m_n}|=1$ is equivalent to
\begin{align}
\label{38}
|\psi_{m_n}|^2&\leq 1,\\
|\psi_{m_n}|^2&\geq 1.
\label{39}
\end{align} 
The constraint $|\psi_{m_n}|^2\leq 1$ is convex; however, $|\psi_{m_n}|^2\geq 1$ is not, which makes \eqref{opt-t-sur} a non-convex problem.
Thus, we have to approximate \eqref{39} with a convex constraint to make \eqref{opt-t-sur} convex. 
To this end, we employ the convex-concave procedure (CCP) and rewrite \eqref{39} as
\begin{align}\label{40}
|\psi_{m_n}|^2\geq|\psi_{m_n}^{(t-1)}|^2-2\mathfrak{R}\{\psi_{m_n}^{(t-1)^*}(\psi_{m_n}-\psi_{m_n}^{(t-1)})\}\geq 1.
\end{align} 
To avoid potential numerical errors and speed up the convergence, we relax \eqref{40} as
\begin{equation}\label{41}
|\psi_{m_n}^{(t-1)}|^2-2\mathfrak{R}\{\psi_{m_n}^{(t-1)^*}(\psi_{m_n}-\psi_{m_n}^{(t-1)})\}\geq 1-\delta,
\end{equation}
where $\delta>0$. Now, we can approximate \eqref{opt-t-sur} as 
\begin{subequations}\label{42}
\begin{align}
 \underset{\{\bf{\Psi}\}
 }{\max}\,  & 
  \hat{f}_0\!\left(\left\{\mathbf{P}^{(t)}\right\}\!,\!\{\bf{\Psi}\}\right)\!\!\!\!\!\!\! &
  \text{s.t.} \,\,\,\,\,&  \eqref{40-b},\eqref{40-c},
\\
 &&&
\eqref{38},\eqref{41}, \,\, \forall m,n,
 \end{align}
\end{subequations}
which is convex. 
We denote the solution of \eqref{42} as $\{{\bf \Psi}^{(\star)}\}=\{{\bf \Psi}^{(\star)}_1,{\bf \Psi}^{(\star)}_2,\cdots,{\bf \Psi}^{(\star)}_M\}$. Because of the relaxation in \eqref{41}, it may happen that $\psi_{m_n}^{(\star)}$, i.e., the $n$-th coefficient of the diagonal matrix ${\bf \Psi}^{(\star)}_m$, does not satisfy $|\psi_{m_n}|=1$. Therefore, we normalize $\{{\bf \Psi}^{(\star)}\}$ as
\begin{equation}
\hat{\psi}_{m_n}=\frac{{\psi}_{m_n}^{(\star)}}{|{\psi}_{m_n}^{(\star)}|}, \hspace{1cm}\forall m,n.
\end{equation}
To guarantee the convergence, we update $\{{\bf \Psi}\}$ as 
 \begin{equation}\label{eq-42}
\{{\bf \Psi}^{(t)}\}=
\left\{
\begin{array}{lcl}
\{\hat{{\bf\Psi}}\}&\text{if}&
f_0\left(\left\{\mathbf{P}^{(t)}\right\},\{\hat{{\bf\Psi}}\}\right)\geq
\\
&&
f_0\left(\left\{\mathbf{P}^{(t)}\right\},\{{\bf\Psi}^{(t-1)}\}\right)
\\
\{{\bf\Psi}^{(t-1)}\}&&\text{otherwise}.
\end{array}
\right.
\end{equation}
For $\mathcal{T}=\mathcal{T}_I$,  our proposed framework  converges since a non-decreasing sequence of objective functions (OF) $f_0$ is generated. 
For $\mathcal{T}=\mathcal{T}_U$, our framework converges to a stationary point of \eqref{opt} because $\mathcal{T}_U$ is convex. 
We summarize our algorithm for maximizing the minimum EE with $\mathcal{T}_U$ in Algorithm I.

\doublespacing 
\begin{table}[htb]
\small
\begin{tabular}{l}
\hline 
 {\textbf{Algorithm I} Maximization of the minimum EE for $\mathcal{T}_U$.} \\
\hline
\hspace{0.2cm}{\textbf{Initialization}}\\
\hspace{0.2cm}{Set $\gamma_1$, $\gamma_2$, $t=1$,  $\{\mathbf{P}\}=\{\mathbf{P}^{(0)}\}$, and$\{{\bf \Psi}\}=\{{\bf \Psi}^{(0)}\}$ }\\
\hline 
\hspace{0.2cm}
{\textbf{While} $\left(\underset{\forall lk}{\min}\,e^{(t)}_{lk}-\underset{\forall lk}{\min}\,e^{(t-1)}_{lk}\right)/\underset{\forall lk}{\min}\,e^{(t-1)}_{lk}\geq\gamma_1$ }\\ 
\hspace{.6cm}{{\bf Optimizing over} $\{\mathbf{P}\}$ {\bf by fixing} $\{{\bf \Psi}^{(t-1)}\}$}\\
\hspace{1.2cm}{Derive $\tilde{r}_{lk}$ 
according to Lemma \ref{lem-4}}\\ 
\hspace{1.2cm}{Derive $\tilde{e}_{lk}$ 
based on \eqref{eq31}}\\ 
\hspace{1.2cm}{Compute $\{{\bf Q}\}$ by solving \eqref{opt-sur2}, i.e., by running}\\
\hspace{1.2cm}{\textbf{While} $\left(\underset{\forall lk}{\min}\,\tilde{e}^{(n)}_{lk}-\underset{\forall lk}{\min}\,\tilde{e}^{(n-1)}_{lk}\right)/\underset{\forall lk}{\min}\,\tilde{e}^{(n-1)}_{lk}\geq\gamma_2$ }\\ 
\hspace{1.8cm}{Update $\mu^{(n)}$ based on \eqref{mu}}\\
\hspace{1.8cm}{Update $\{{\bf Q}\}$ by solving \eqref{opt-sur3}}\\
\hspace{1.2cm}{Compute $\{\mathbf{P}^{(t)}\}$ as $\mathbf{P}^{(t)}_{lk}=\mathbf{Q}^{(t)}_{lk}\mathbf{Q}^{(t)^H}_{lk}\,\, \forall lk$}\\
\hspace{.6cm}{{\bf Optimizing over} $\{{\bf\Psi}\}$ {\bf by fixing} $\{\mathbf{P}^{(t-1)}\}$}\\
\hspace{1.2cm}{Derive $\hat{r}_{lk}^{(t)}$  
according to Corollary 1}\\
\hspace{1.2cm}{Calculate $\{{\bf \Psi}^{(t)}\}$ by solving \eqref{opt-t-sur}}\\
\hspace{.6cm}{$t=t+1$}\\
\hspace{0.2cm}{\textbf{End (While)}}\\
\hspace{0.2cm}{{\bf Return} $\{\mathbf{P}^{(\star)}\}$ and $\{{\bf \Psi}^{(\star)}\}$.}\\
\hline 
\end{tabular} 
\end{table}
\singlespacing


\subsubsection{STAR-RIS using the MS scheme} The solution of \eqref{opt-t} for STAR-RIS with the MS scheme is very similar to the solution for a reflective RIS. 
In this case, we can still use the surrogate functions in Corollary \ref{cor-1} to make the rates a jointly concave function of the STAR-RIS coefficients, i.e., $\{{\bf\Psi}^{t}\}$ and $\{{\bf\Psi}^{r}\}$. For $\mathcal{T}_{SU}$, we can update the STAR-RIS parameters by solving \eqref{opt-t-sur}, and the  algorithm obtains a stationary point of \eqref{opt}. 
For $\mathcal{T}_{SI}$, we have to ``convexify'' the constraint in \eqref{eq-8}, which can be done by rewriting it as the two convex constraints in \cite[Eq. (34)]{soleymani2022rate} and \cite[Eq. (36)]{soleymani2022rate}. Then we can update the STAR-RIS coefficients similar to the proposed scheme for the reflective RIS. 
\subsection{Computational Complexity Analysis} 
Our optimization framework operates iteratively, with the actual computational complexity and runtime contingent upon the specific implementation of the algorithms. In this subsection, we calculate an approximate upper bound for the number of multiplications imposed by running our algorithms. To this end, we consider the maximization of the minimum rate with the feasibility set $\mathcal{T}_U$. The computational complexity of other optimization problems, including the weighted sum rate, the minimum EE, and global EE, can be similarly computed.  

Each iteration of our proposed framework consists of two steps. In the first step, 
 we optimize the transmit covariance matrices by solving \eqref{opt-sur}, which is convex when the minimum rates of the users are maximized. 
We solve the convex problem in \eqref{opt-sur} by numerical optimization tools. 
 {To numerically solve a convex optimization problem,  the number of Newton iterations increases proportionally with the square root of the number of its constraints \cite[Chapter 11]{boyd2004convex}, which is equal to $\sum_l(K_l+1)$ in \eqref{opt-sur} for the maximization of the minimum rate. Note that the maximization of the minimum rate can be written as in \cite[Eq. (30)]{soleymani2023spectral} and thus, has $\sum_l(K_l+1)$ constraints, considering the power budget in \eqref{32-d}. Now, we provide an approximate upper bound for the number of multiplications to find a solution in each Newton iteration. To solve each Newton iteration, $\sum_lK_l$ surrogate functions have to be computed  for the rates, $\tilde{r}_{lk}$. The surrogate rates $\tilde{r}_{lk}$ in Lemma \ref{lem-4} are quadratic in $\{{\bf Q}\}$, and the computational complexity to compute each  surrogate rate can be approximated as $\mathcal{O}\left[\sum_l\sum_kN_{BS,l}^2(2N_{BS,l}+N_{u,lk})\right]$. Note that the coefficients in \eqref{eq24} can be computed once at the beginning of the Newton iterations, and there is no need to recompute them in each Newton iteration to reduce the overall computational complexity of the framework. Finally, the  computational complexity to update $\{{\bf P}\}$ can be approximated as $\mathcal{O}\left[\mathcal{L}\sum_l\sum_kN_{BS,l}^2\sqrt{\sum_l(K_l+1)}(2N_{BS,l}+N_{u,lk})\right]$, where $\mathcal{L}=\sum_l K_l$ is the total number of users in the system.}

Now, we derive an approximation for the number of the multiplications needed to update $\{{\bf \Psi}\}$. To this end, we have to calculate the number of multiplications for solving the surrogate optimization problem in \eqref{opt-t-sur}, which is convex for the feasibility set $\mathcal{T}_U$. {The number of constraints in \eqref{opt-t-sur} is equal to $\sum_l K_l+\sum_mN_{RIS,m}$. Thus, the number of the Newton iterations grows with $\sqrt{\sum_l K_l+\sum_mN_{RIS,m}}$. To solve each Newton iteration, we have to compute $\mathcal{L}$ surrogate functions for the rates, $\hat{r}_{lk}$, as well as $L\mathcal{L}$ equivalent channels, according to \eqref{ch-equ}. To compute each channel, ${\bf H}_{lk,i}$, $\forall l,k,i$, there are approximately $\sum_mN_{u,lk}N_{BS,i}N_{RIS,m}$ multiplications, since the matrices ${\bf \Psi}_m$ are diagonal, which reduces the computational complexity. Moreover, the structure of the rates in \eqref{eq34} is very similar to the rates in \eqref{eq24}. Hence, the computational complexity of calculating $\hat{r}_{lk}$ in \eqref{eq34}  is on the same order of the computational complexity of attaining $\tilde{r}_{lk}$ as in \eqref{eq24} and can be approximated as $\mathcal{O}\left[\sum_l K_l N_{BS,l}^2(2N_{BS,l}+N_{u,lk})\right]$. Finally, the computational complexity of updating $\{{\bf \Psi}\}$ can be approximated as $\mathcal{O}[\sum_l\mathcal{L}\sqrt{\sum_l K_l+\sum_mN_{RIS,m}}( K_l N_{BS,l}^2(2N_{BS,l}+N_{u,lk})+\sum_mN_{u,lk}N_{BS,i}N_{RIS,m})]$.}
Assuming that the maximum number of iterations is equal to $T$, the computational complexity of solving the maximization of the minimum rate for $\mathcal{T}_U$ using our framework is $T$ times the summation of the computational complexities of updating $\{{\bf P}\}$ and $\{{\bf \Psi}\}$. 

\subsection{Discussion on Extending the Framework to Uplink} 
The framework can also be applied to the SE and EE maximization of the uplink along with FBL coding, since the structure of the rates with respect to the beamforming matrices and channels is very similar to the DL scenario considered in the paper. The detailed solution for the UL scenario is beyond the scope of this work. However, we provide some insights on how the proposed solutions can be modified to maximize the sum rate in UL communications in a multi-cell multiple-access channel (MAC), while the intercell interference is treated as noise. In this case, the sum rate of users associated with BS $l$ is
  \begin{equation}\label{(47)}
      r_l=\log|{\bf I}+{\bf D}_l^{-1}{\bf S}_l|-\frac{Q^{-1}(\epsilon)}{n_t}\sqrt{2\text{Tr}({\bf I}-{\bf D}_l({\bf D}_l+{\bf S}_l)^{-1})},
  \end{equation}
  where ${\bf S}_l=\sum_k{\bf H}_{lk,l}{\bf Q}_{lk}({\bf H}_{lk,l}{\bf Q}_{lk})^H$ is the covariance matrix of the signals decoded at BS $l$, and ${\bf D}_l=\sigma^2{\bf I}+\sum_{\forall ik,i\neq l}{\bf H}_{lk,l}{\bf Q}_{lk}({\bf H}_{lk,l}{\bf Q}_{lk})^H$ is the covariance matrix of the noise plus interference, where ${\bf Q}_{ik}$ is the beamforming matrix at U$_{ik}$, and ${\bf H}_{ik,l}$ is the uplink channel between U$_{ik}$ and BS $l$. As it can be easily verified, ${\bf S}_l$ and ${\bf D}_l$
in \eqref{(47)} are quadratic in channels and beamforming matrices, which follow a similar structure as the matrices ${\bf S}_{lk}$ and ${\bf D}_{lk}$ in \eqref{1-multi}. Thus, to obtain a quadratic and concave surrogate function for $r_l$, we can employ the bounds in Lemma \ref{lem-2} and Lemma \ref{lem-3}, and follow the steps in the proof of Lemma \ref{lem-4}. Once the surrogate functions for the rates are calculated, the beamforming matrices and RIS coefficients can be updated by solving the corresponding optimization problems according to our proposed framework.
 
\section{Numerical results}\label{sec-iv}
 In this section, we provide numerical results based on Monte Carlo simulations. To this end, we consider a two-cell system with one RIS and $K$ users per cell, similar to \cite[Fig. 2]{soleymani2022rate}. We also consider that each BS/user/RIS has $N_{BS}/N_u/N_{RIS}$ TAs/RAs/elements. Moreover, the locations and heights of the users/BSs/RISs are chosen similar to \cite{soleymani2022rate}.  We assume that the power budgets of the BSs are equal to $P$. 
 {To generate the channels, we assume that the links with respect to the RISs benefit from a line of sight (LoS) for both BSs and the users that are located in the same cell as the RIS. Therefore, these links follow Rician fading associated with a Rician factor of $3$. More particularly, these channels are generated according to \cite[(60)-(60)]{soleymani2022improper}. On the other hand, we assume that the direct links between the BSs and the users as well as the links through the RISs across the cells are of a non-LoS (NLoS) nature and consequently, follow a Rayleigh distribution. This means that each entry of the corresponding channel matrices follows a complex-valued proper Gaussian distribution with zero mean and unit variance. The large-scale fading of the links is modeled according to \cite[(59)]{soleymani2022improper}. The path-loss exponents of the LoS and NLoS links are $2.2$ and $3.75$, respectively.  
The noise power density and channel bandwidth are, respectively, assumed to be $-174$ dBm per Hz and $1.5$ MHz. The other simulation parameters are chosen according to \cite{soleymani2022improper, soleymani2022rate}.
}

In the following, we provide numerical results for SE and EE maximization in Section \ref{sec-iv-a} and Section \ref{sec-iv-b}, respectively. 
The  schemes considered in this section are as follows:
\begin{itemize}
\item {\bf RIS} (or {\bf RIS$_{I}$}): Our algorithms for MIMO RIS-assisted URLLC systems with multiple data streams per users, and $\mathcal{T}_U$ (or $\mathcal{T}_I$).

\item {\bf No-RIS}: The scheme for MIMO URLLC systems with multiple data streams per users, but without RIS.

\item {\bf RIS-Rand} (or {\bf S-RIS-Rand}): The algorithm for MIMO RIS-aided (or STAR-RIS-aided) URLLC systems with multiple data streams per users, but without optimizing RIS elements.

\item {\bf STAR-RIS}: Our algorithms for MIMO STAR-RIS-aided URLLC BCs with multiple data streams per users, $\mathcal{T}_U$, and the MS scheme.

\item {\bf SS-RIS}: Our algorithms for MIMO RIS-assisted URLLC systems with single-stream data transmission per users, and $\mathcal{T}_U$.
\end{itemize}
 {As emphasized in Section \ref{1}, there is no other work on multi-user MIMO RIS-aided systems with FBL coding. Thus, we compare the performance of our proposed algorithm to a single-stream data transmission scheme, the multiple-stream data transmission scheme for systems without RIS, and with a non-optimized RIS coefficients as benchmarks. }
\subsection{Spectral Efficiency Metrics}\label{sec-iv-a}
\begin{figure}
    \centering
    \begin{subfigure}
{0.24\textwidth}
        \centering
           \includegraphics[width=.9\textwidth]{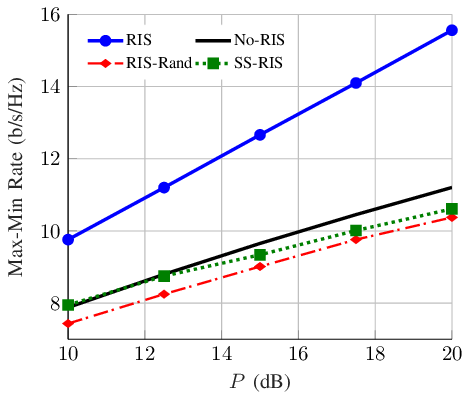}
        \caption{Average max-min rate.}
    \end{subfigure}
\begin{subfigure}
{0.24\textwidth}
        \centering
       \includegraphics[width=.9\textwidth]{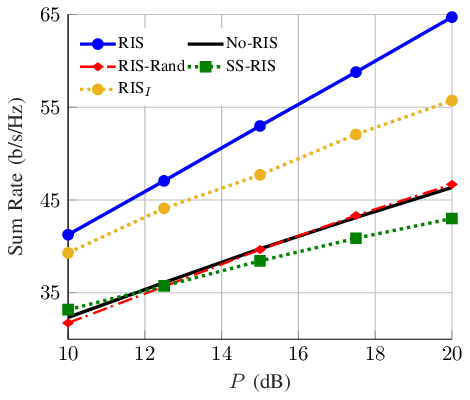}
        \caption{Average sum rate.}
    \end{subfigure}%
    \caption{Spectral efficiency metrics versus $P$ for $N_{{BS}}=4$, $N_{u}=4$, $K=2$, $L=2$,  $M=2$, $n_t=256$, $\epsilon=10^{-5}$,  and $N_{{RIS}}=20$.}
	\label{Fig-rr1}  
\end{figure}

Here, we present numerical results for SE maximization. To this end, we consider the maximum of the average minimum rate and the average sum rate of users as performance metrics. We refer to the maximum of the minimum achievable rate of users as the max-min rate. Note that it is likely that all the users get the same achievable rate when maximizing the minimum rate, which can provide a reasonable fairness among the users \cite{soleymani2020improper}. Nevertheless, when the sum rate is maximized, it could be the case that the users with weaker channels are switched off if QoS constraints are not considered.  
To provide a comprehensive analysis, we explore the impact of various parameters on  the system performance, including the BS power budgets, packet length, as well as the maximum tolerable packet error rate.

\subsubsection{Impact of power budget}

In Fig. \ref{Fig-rr1}, we show the average max-min rate and sum rate versus $P$ for $N_{{BS}}=4$, $N_{u}=4$, $K=2$, $L=2$,  $M=2$, $n_t=256$, $\epsilon=10^{-5}$,  and $N_{{RIS}}=20$. In this figure, the RISs significantly improve the min-max rate and the sum rate, when the RIS elements are optimized. 
 {Surprisingly, employing RISs having random elements degrades the max-min rate. Moreover, we can observe that the benefits of RIS increase with $P$ for these examples.  Additionally, the multi-stream scheme substantially improves the SE for these two examples, where even the No-RIS scheme is slightly better than the single-stream scheme, especially at high SNRs. In these examples, the maximum number of streams per user is $I=\min(N_{{BS}},N_{u})=4$, which makes single-stream transmission gravely suboptimal.  Indeed, when the system complexity increases, a more sophisticated resource allocation scheme should be adopted in order to avoid performance degradation.}

\subsubsection{Impact of packet length}

\begin{figure}
    \centering
    \begin{subfigure}
{0.24\textwidth}
        \centering
\includegraphics[width=\textwidth]{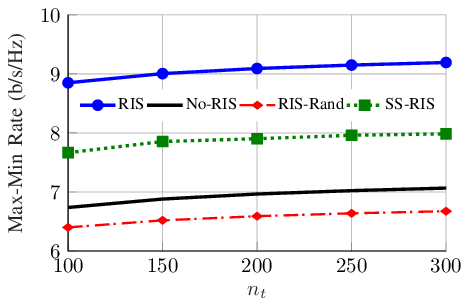}
        \caption{Average max-min rate.}
    \end{subfigure}
\begin{subfigure}
{0.24\textwidth}
        \centering
       \includegraphics[width=.8\textwidth]{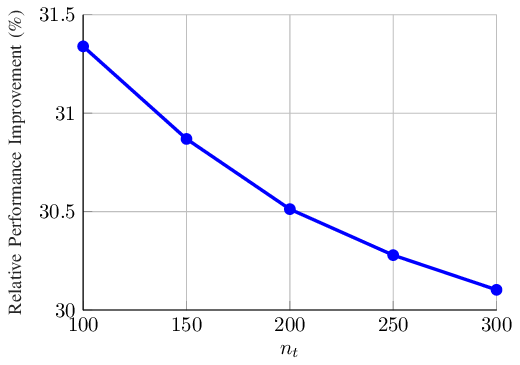}
        \caption{Average performance improvement.}
    \end{subfigure}%
    \caption{Average max-min rate and relative performance improvement by RIS versus $n_t$ for $P=10$ dB, $N_{{BS}}=4$, $N_{u}=3$, $K=2$, $L=2$,  $M=2$, $\epsilon=10^{-5}$,  and $N_{{RIS}}=20$.}  
	\label{Fig-rr2} 
\end{figure}
Fig. \ref{Fig-rr2} shows the average max-min rate and the relative performance improvement by RIS versus  $n_t$ for $P=10$ dB, $N_{{BS}}=4$, $N_{u}=3$, $K=2$, $L=2$,  $M=2$, $\epsilon=10^{-5}$,  and $N_{{RIS}}=20$.
Observe that an RIS can substantially increase the average max-min rate.  {The benefits of deploying RISs decrease with $n_t$. Indeed, the shorter the packet length is, the higher the relative improvements provided by RISs can be. As indicated, the packet length correlates with the level of stringency in the latency constraint. Shorter packet lengths are needed, when the latency constraint  is more stringent. Thus, this result shows that the RIS benefits increase as the latency constraint becomes more stringent. In other words, RISs can even be more beneficial for URLLC systems.}  Moreover, the rates increase with $n_t$ and converge to the Shannon rate when $n_t$ becomes higher. {Furthermore, we can observe that employing multi-stream data transmission substantially increases the average max-min rate for all the values of $n_t$. Note that the benefits of multi-stream data transmission in this figure is lower than in Fig. \ref{Fig-rr1} since the maximum number of streams per users, $I$, is $3$ in this example, which is less than $I$ in Fig. \ref{Fig-rr1}, and the benefits of multi-stream schemes grows with $I$.}

\begin{figure}[t]
    \centering
    \begin{subfigure}[t]{0.4\textwidth}
        \centering
           \includegraphics[width=.9\textwidth]{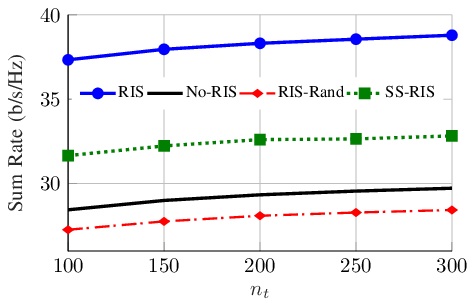}
        \caption{$N_{{BS}}=4$, $N_{u}=3$, $K=2$, and $N_{{RIS}}=20$.}
    \end{subfigure}
\begin{subfigure}[t]{0.4\textwidth}
        \centering
       \includegraphics[width=.9\textwidth]{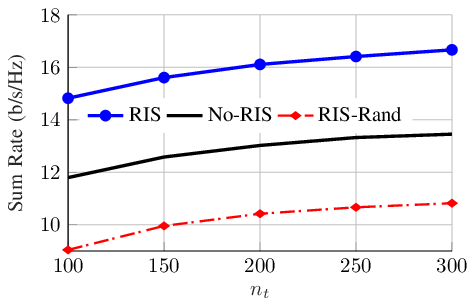}
        \caption{$N_{{BS}}=2$, $N_{u}=2$, $K=4$, and $N_{{RIS}}=60$.}
    \end{subfigure}%
    \caption{Average sum rate versus $n_t$ for $P=10$ dB, $L=2$,  $M=2$, and $\epsilon=10^{-5}$.}  
	\label{Fig-rr3} 
\end{figure}
Fig. \ref{Fig-rr3} shows the average sum rate versus  $n_t$ for $P=10$ dB, $L=2$,  $M=2$, $\epsilon=10^{-5}$ and for different values of $N_{BS}$, $N_u$, $K$ and $N_{RIS}$. In this figure, RISs substantially enhance the average sum rate. However, the benefits of RISs are much more significant when there are less users in the system (Fig. \ref{Fig-rr3}a). Moreover, we can observe that the impact of decreasing the packet length is more severe in the system for a higher number of users. This may show the importance of employing effective interference-management techniques, which should be addressed in future studies. In Fig. \ref{Fig-rr3}a, we also observe that the algorithm conceived for the multi-stream data transmission per user outperforms the beamforming scheme, which employs a single-stream data transmission. 

\subsubsection{Impact of the reliability constraint}

\begin{figure}[t]
    \centering
    \begin{subfigure}[t]{0.24\textwidth}
        \centering
\includegraphics[width=\textwidth]{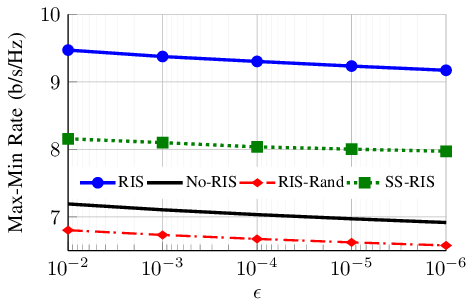}
        \caption{Average max-min rate.}
    \end{subfigure}
\begin{subfigure}[t]{0.24\textwidth}
        \centering
       \includegraphics[width=.8\textwidth]{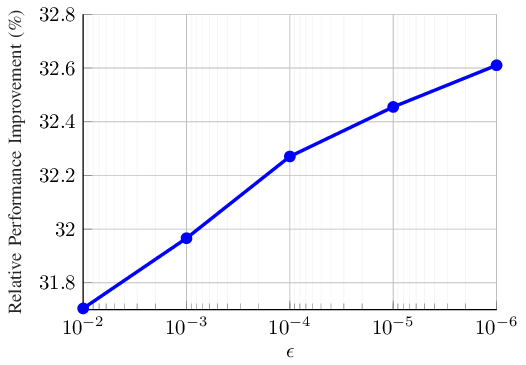}
        \caption{Average performance improvement.}
    \end{subfigure}%
    \caption{Average max-min rate and relative performance improvement by RIS versus $\epsilon$ for $P=10$ dB, $N_{{BS}}=4$, $N_{u}=3$, $K=2$, $L=2$,  $M=2$, $n_t=256$ bits,  and $N_{{RIS}}=20$.}  
	\label{Fig-rr4} 
\end{figure}
Fig. \ref{Fig-rr4} shows the average max-min rate and relative performance improvement enabled by the deployment of RISs versus  $\epsilon$ for $P=10$ dB, $N_{{BS}}=4$, $N_{u}=3$, $K=2$, $L=2$,  $M=2$, $n_t=256$ bits,  and $N_{{RIS}}=20$. In this example, RIS significantly enhances the average max-min rate for all the $\epsilon$ considered. As expected, the average max-min rate decreases when the reliability constraint is more stringent. In other words, we have to transmit at a lower rate to reduce the decoding error rate. Moreover, the multi-stream scheme significantly outperforms the single-stream data transmission. Furthermore, we can observe in Fig. \ref{Fig-rr4}b that the benefits of RIS increase when $\epsilon$ decreases. Thus, RISs can be even more beneficial when more reliable communication is required.

\begin{figure}[t]
    \centering
    \begin{subfigure}[t]{0.4\textwidth}
        \centering
           \includegraphics[width=.9\textwidth]{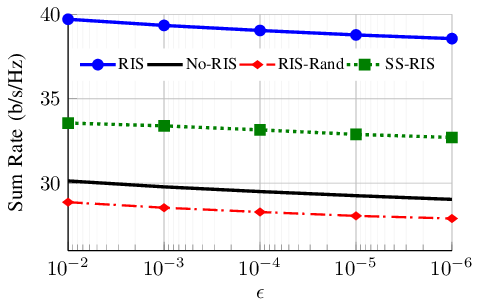}
        \caption{$N_{{BS}}=4$, $N_{u}=3$, $K=2$, and $N_{{RIS}}=20$.}
    \end{subfigure}
\begin{subfigure}[t]{0.4\textwidth}
        \centering       \includegraphics[width=.9\textwidth]{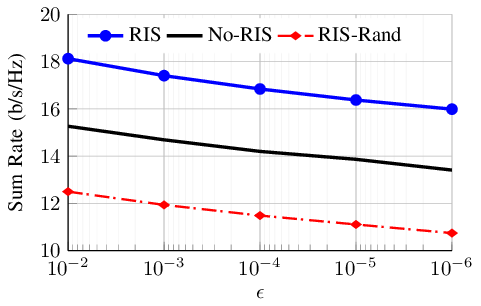}
        \caption{$N_{{BS}}=2$, $N_{u}=2$, $K=4$, and $N_{{RIS}}=60$.}
    \end{subfigure}%
    \caption{Average sum rate versus $\epsilon$ for $P=10$ dB, $L=2$,  $M=2$, and $n_t=256$ bits.}
	\label{Fig-rr5} 
\end{figure}
Fig. \ref{Fig-rr5} shows the average sum rate versus $\epsilon$  for $P=10$ dB, $L=2$,  $M=2$, $n_t=256$ bits and different $N_{BS}$, $N_u$, $K$ and $N_{RIS}$.  As it can be observed, RISs enhance the average sum rate in both networks, if the RIS elements are optimized. However, RISs with random coefficients decrease the average sum rate in these two examples.  {We can also observe that the impact of varying $\epsilon$ is higher in the system supporting more users. Additionally, we can note a significantly greater advantage from RISs when there are fewer users in the network.}
Moreover, we can observe that the average sum rate substantially increases, if we employ multi-stream data transmission.

\subsubsection{Comparison of reflective RIS and STAR-RIS}
\begin{figure}[t]
    \centering
      \includegraphics[width=.4\textwidth]{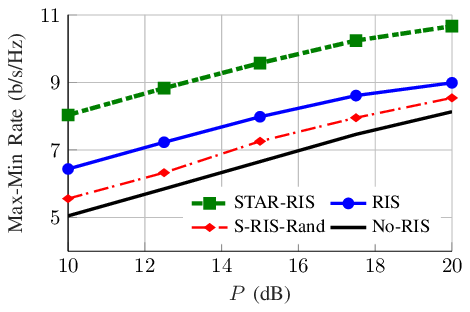}
    \caption{Average max-min rate versus $P$ for $N_{{BS}}=2$, $N_{u}=2$, $K=2$, $L=1$,  $M=1$, $\epsilon=10^{-5}$,  $n_t=256$ bits and $N_{{RIS}}=40$.} 
	\label{Fig-rr6} 
\end{figure}
A reflective RIS has the same performance as a STAR-RIS if all the users are in the reflection half-space of the reflective/STAR-RIS. Thus, to evaluate the performance differences between these two technologies, we consider a single-cell $2\times 2$ MIMO BC in which one of the users is in the reflection space, and the other one is in the transmission space. As shown in Fig. \ref{Fig-rr6}, 
STAR-RIS using the MS scheme can significantly outperform reflective RIS in this example.
For instance, STAR-RIS provides about $25\%$ higher  average max-min rate at $P=10$ dB compared to the reflective RIS.

\subsubsection{ {Convergence behavior}}
\begin{figure}[t]
    \centering
      \includegraphics[width=.4\textwidth]{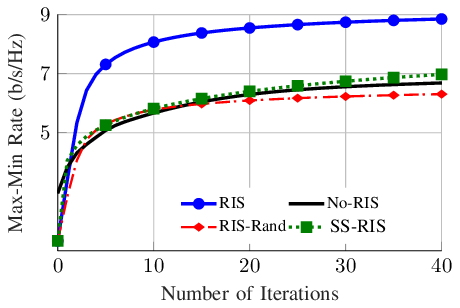}
    \caption{ {Average max-min rate versus the number of iterations for $N_{BS} = 4$, $N_u = 3$,
$K = 2$, $\epsilon=10^{-4}$, $n_t = 256$ bits and $N_{RIS} = 20$.}} 
	\label{Fig-rr7} 
\end{figure}
  {Fig. \ref{Fig-rr7} shows the average max-min
rate versus the number of iterations for $N_{BS} = 4$, $N_u = 3$,
$K = 2$, $\epsilon=10^{-4}$, $n_t = 256$ bits and $N_{RIS} = 20$. This figure illustrates the convergence of the considered algorithms, from which the performance-complexity tradeoff may be inferred. In this example, our scheme proposed for RIS-aided systems employing multiple streams per user outperforms the final solution of the other algorithms after as few as four iterations. Indeed, the lower computational complexity of these alternative algorithms results in significant performance degradation, while to achieve a certain target performance, the RIS scheme requires substantially fewer iterations. Additionally, the  No-RIS, RIS-Rand, and SS-RIS algorithms may not support a specific max-min rate, while our framework can achieve it with a reasonable number of iterations. For instance, while the No-RIS, RIS-Rand, and SS-RIS  algorithms struggle to support a max-min rate of 8 b/s/Hz, our framework achieves it in only nine iterations.}

\subsection{Energy Efficiency Metrics}\label{sec-iv-b}
Now we investigate the EE of RIS in MU-MIMO URLLC BCs. To this end, we consider the impact of $P_c$, $\epsilon$, and $n_t$. In the examples provided in this subsection, we assume that each RIS consumes $1$ W power. Thus, to make a fair comparison, we consider a lower constant power ($P_c$) for the systems operating without RISs. Moreover, we assume that the power budget of the BSs is $P=10$ dB. 
\begin{figure}[t]
    \centering
    \begin{subfigure}[t]{0.24\textwidth}
        \centering
           \includegraphics[width=\textwidth]{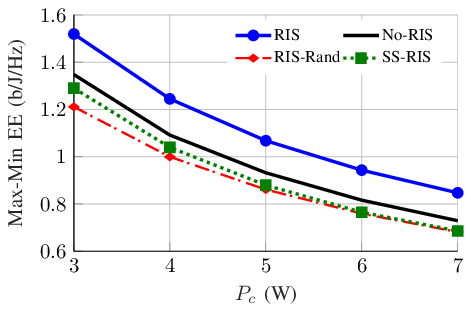}
        \caption{Average max-min EE.}
    \end{subfigure}
\begin{subfigure}[t]{0.24\textwidth}
        \centering
       \includegraphics[width=\textwidth]{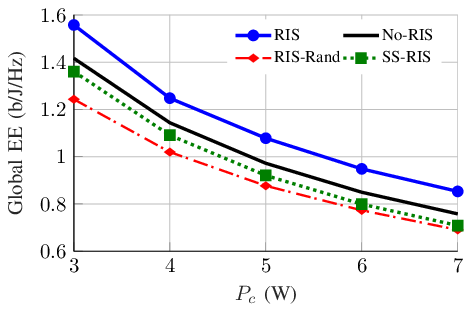}
        \caption{Average global EE.}
    \end{subfigure}%
    \caption{Energy efficiency metrics versus $P_c$ for $N_{{BS}}=5$, $N_{u}=5$, $K=2$, $L=2$,  $M=2$, $n_t=256$, $\epsilon=10^{-5}$,  and $N_{{RIS}}=20$.}
	\label{Fig-ee}  
\end{figure}
\subsubsection{Impact of $P_c$} Fig. \ref{Fig-ee} shows the average max-min EE and GEE versus $P_c$ for $N_{{BS}}=5$, $N_{u}=5$, $K=2$, $L=2$,  $M=2$, $n_t=256$ bits, $\epsilon=10^{-5}$, and $N_{{RIS}}=20$.  Note that we use the term max-min EE to refer to the highest minimum EE. As it can be observed, RISs significantly increase the average max-min EE and GEE with FBL. Interestingly, RISs may reduce the EE, if their elements are random. However, our proposed algorithms can statistically enhance the EE of RIS-aided scenarios. 
For instance, in the particular example of Fig. \ref{Fig-ee}a, an RIS provides more than $10\%$ improvements over the systems disregarding the RISs for all the  values of $P_c$ considered.
Additionally, the multi-stream scheme significantly outperforms the single-stream data transmission in the both examples. 

\subsubsection{Impact of the reliability constraint}
\begin{figure}[t]
        \centering
           \includegraphics[width=.4\textwidth]{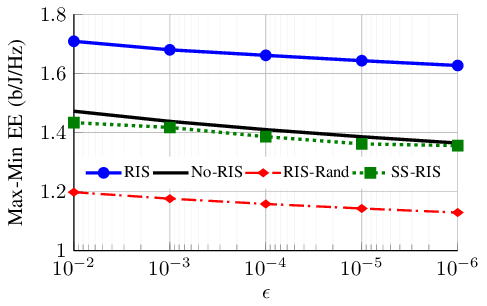} 
    \caption{Average max-min EE  versus $\epsilon$ for $P=10$ dB, $N_{{BS}}=4$, $N_{u}=4$, $K=2$, $L=1$,  $M=1$, $n_t=256$ bits,  and $N_{{RIS}}=20$.}
	\label{Fig-ee2}  
\end{figure}
Fig. \ref{Fig-ee2} shows the average max-min EE  versus $\epsilon$ for $P=10$ dB, $N_{{BS}}=4$, $N_{u}=4$, $K=2$, $L=1$,  $M=1$, $n_t=256$ bits,  and $N_{{RIS}}=20$. In this example, the RIS provides a significant gain, which increases with $\epsilon$. Again, RIS decreases the max-min EE when its elements are random. 
Moreover, we can observe that single-stream data transmission is suboptimal in this $4\times 4$ MIMO system. Indeed,   the multi-stream systems communicating without RIS outperforms the RIS-aided single-stream data transmission. 

\begin{figure}[t]
    \centering
       \includegraphics[width=.3\textwidth]{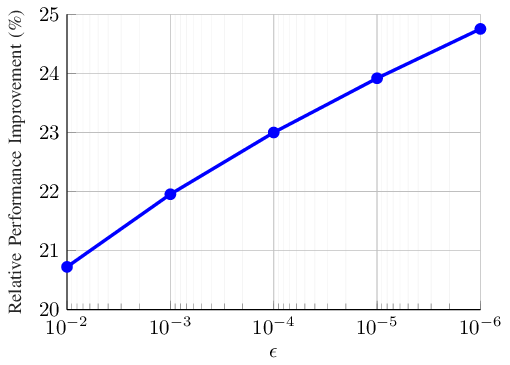} 
		    \caption{Average EE performance improvement by RIS versus $\epsilon$ for $P=10$ dB, $N_{{BS}}=2$, $N_{u}=2$, $K=2$, $L=2$,  $M=2$, $n_t=256$ bits,  and $N_{{RIS}}=20$.}
	\label{Fig-ee3}  
\end{figure}
Fig. \ref{Fig-ee3} shows the average EE performance improvement attained by RISs versus $\epsilon$ for $P=10$ dB, $N_{{BS}}=2$, $N_{u}=2$, $K=2$, $L=2$,  $M=2$, $n_t=256$ bits,  and $N_{{RIS}}=20$.
 {In this example, the RIS substantially increases the average max-min for all values of $\epsilon$. 
Moreover,  higher gains are achieved by RISs, when the tolerable bit error rate is lower. Thus, the more reliable the communication has to be, the more energy efficient the RIS-aided systems becomes.} 

\subsubsection{Impact of the packet length}
\begin{figure}[t]
    \centering
        \includegraphics[width=.4\textwidth]{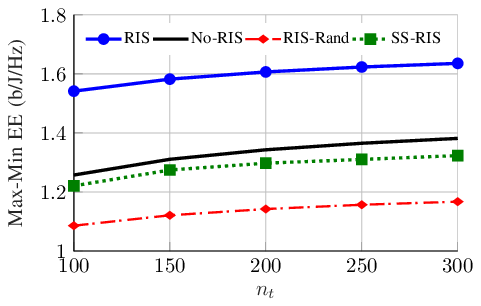}
    \caption{Average max-min EE versus $n_t$ for $P=10$ dB, $N_{{BS}}=4$, $N_{u}=4$, $K=2$, $L=1$,  $M=1$, $\epsilon=10^{-5}$,  and $N_{{RIS}}=20$.}
	\label{Fig-ee4}  
\end{figure}
Fig. \ref{Fig-ee4} shows the average max-min EE versus $n_t$ for $P=10$ dB, $N_{{BS}}=4$, $N_{u}=4$, $K=2$, $L=1$,  $M=1$, $\epsilon=10^{-5}$,  and $N_{{RIS}}=20$.
In this figure, RIS provides a significant gain when the RIS elements are optimized by our proposed algorithm. However, random RIS coefficients degrade  the EE performance. 

\begin{figure}[t]
    \centering
       \includegraphics[width=.3\textwidth]{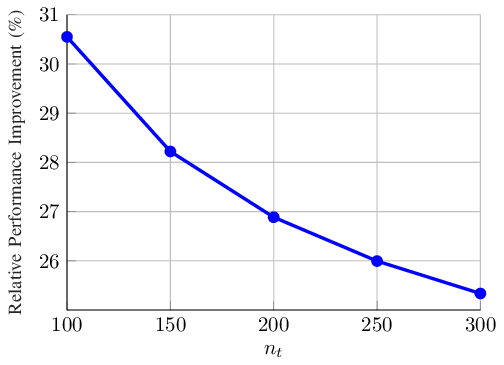}
		    \caption{Average EE performance improvement by RIS versus $n_t$ for $P=10$ dB, $N_{{BS}}=2$, $N_{u}=2$, $K=2$, $L=2$,  $M=2$, $\epsilon=10^{-5}$,  and $N_{{RIS}}=20$.}
	\label{Fig-ee5}  
\end{figure}
Fig. \ref{Fig-ee5} shows the average EE improvement by RISs versus $n_t$ for $P=10$ dB, $N_{{BS}}=2$, $N_{u}=2$, $K=2$, $L=2$,  $M=2$, $\epsilon=10^{-5}$,  and $N_{{RIS}}=20$.
 {The average improvements reduce as $n_t$ increases. This indicates that the lower the tolerable latency, the higher gain the RIS can provide. In other words, RIS-aided systems become more energy efficient when a low latency is required, as in control channels, for example.}

\section{ {Summary, Conclusions, and Future Research}}\label{sec-v}
An optimization framework was  proposed for MU-MIMO RIS-aided systems with FBL by considering the NA for the rate expressions. 
To this end, we first calculated closed-form expressions for the FBL rate and then obtained suitable concave lower bounds for the FBL rates. 
Our proposed framework can be adapted to a large variety of MU-MIMO systems in which interference is treated as noise. Moreover, the framework can obtain a stationary point of a broad spectrum of practical optimization problems such as the maximization of the minimum/sum rate, GEE and minimum EE, when the set of the feasible RIS coefficients adheres to convexity. In summary, the key conclusions of this work are:
\begin{itemize}
\item RISs may significantly increase the average max-min rate, sum rate, max-min EE and global EE of the MU-MIMO systems considered. However, the RIS elements should be optimized to attain the above benefits, since RISs utilizing random elements may even degrade the system performance.

\item  The benefits of RISs increase when the packet length is reduced and/or the tolerable bit error rate is lower. The packet length can be related to the latency constraint, and the tolerable bit error rate represents the reliability constraint. Thus, these results show that RISs can be even more beneficial in URLLC systems than in non-URLLC systems.

\item  Multiple-stream data transmission for each user significantly outperforms single-stream data transmission (beamforming) in MU-MIMO RIS-assisted URLLC systems. Indeed, both the reliability and latency can be enhanced, when multiple-stream data transmission is employed in multiple-antenna systems.
\end{itemize}

  {In future research, it would be interesting to integrate advanced interference-management techniques, such as RSMA and NOMA, into MU-MIMO URLLC systems. In this regard, the solutions in, e.g., \cite{soleymani2022rate, soleymani2023noma}, in combinations with the FBL rate expressions in Lemma \ref{lem-r} might be helpful. Another challenging research direction is to extend the optimization framework proposed in this work to scenarios with statistical or imperfect CSI. To this end, the robust designs in \cite{yao2023robust, zhou2020framework} can be harnessed.  Moreover, studying the performance of other concepts/technologies for RIS, such as holographic RIS \cite{an2023stacked}, active RIS \cite{zhang2022active}, BD-RIS \cite{li2022beyond, santamaria2024mimo}, and globally-passive RIS \cite{fotock2023energy, soleymani2024energy} can be another promising direction for extending this work.  Furthermore, considering RIS-aided cell-free URLLC systems is  worth exploring in future research. Finally, another interesting line of research is to develop  computationally more efficient resource allocation schemes that do not compromise performance. 
} 
\bibliographystyle{IEEEtran}
\bibliography{ref2}

\begin{IEEEbiography}[{\includegraphics[width=1in,height=1.25in,clip,keepaspectratio]{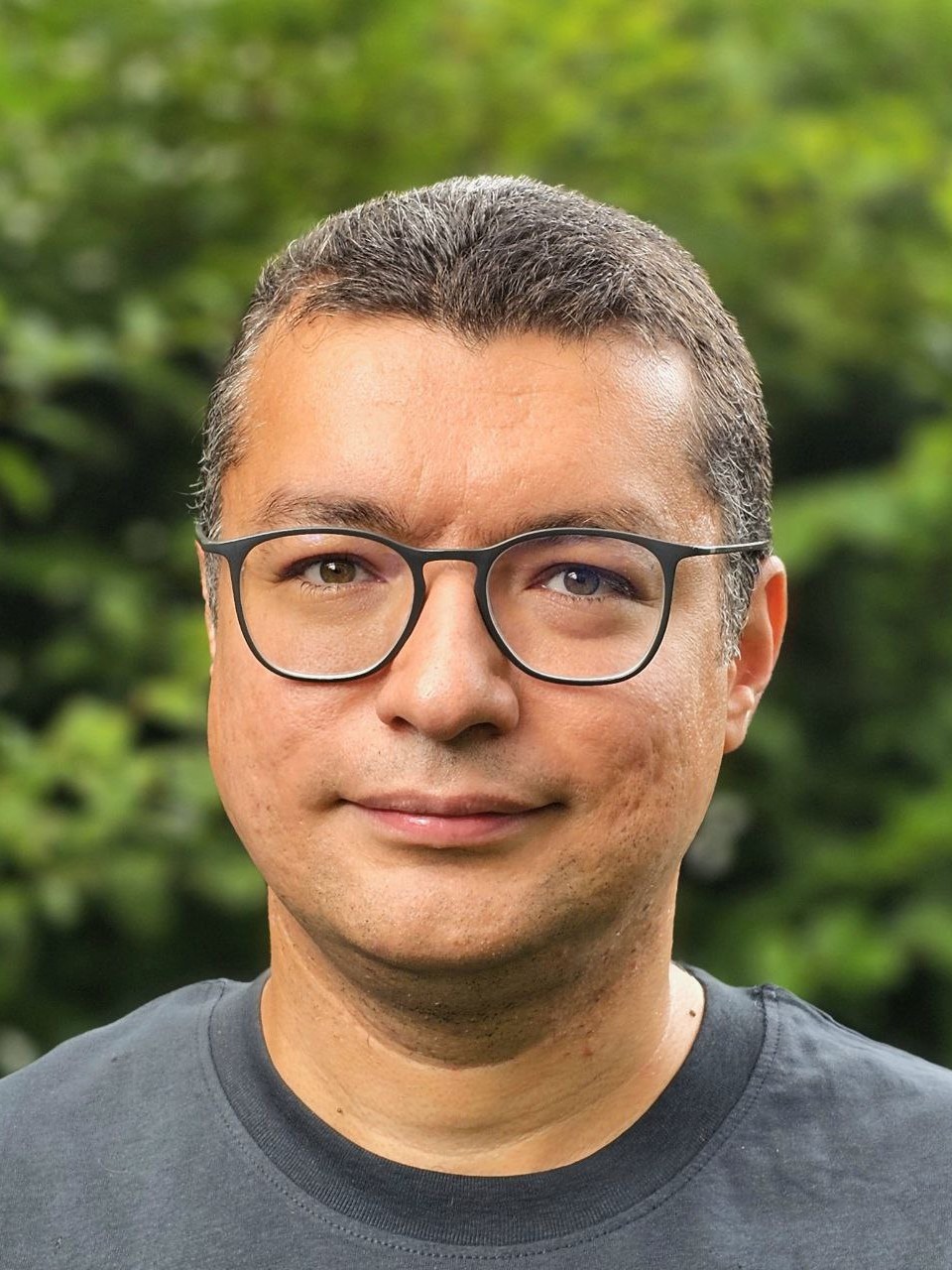}}]
{Mohammad Soleymani} was born in Arak, Iran. He received the B.Sc. degree from Amirkabir University of Technology (Tehran Polytechnic), the M.Sc. degree from Sharif University of Technology, Tehran, Iran, and the Ph.D. degree (with distinction) from the University of Paderborn, Germany, all in electrical engineering. He is currently an assistant professor (Akademischer Rat) at the Signal and System Theory Group, University of Paderborn. He was a Visiting Researcher at the University of Cantabria, Spain. His research interests include multiuser communications, wireless networking, convex optimization and statistical signal processing.
\end{IEEEbiography} 

\begin{IEEEbiography}[{\includegraphics[width=1in,height=1.25in,clip,keepaspectratio]{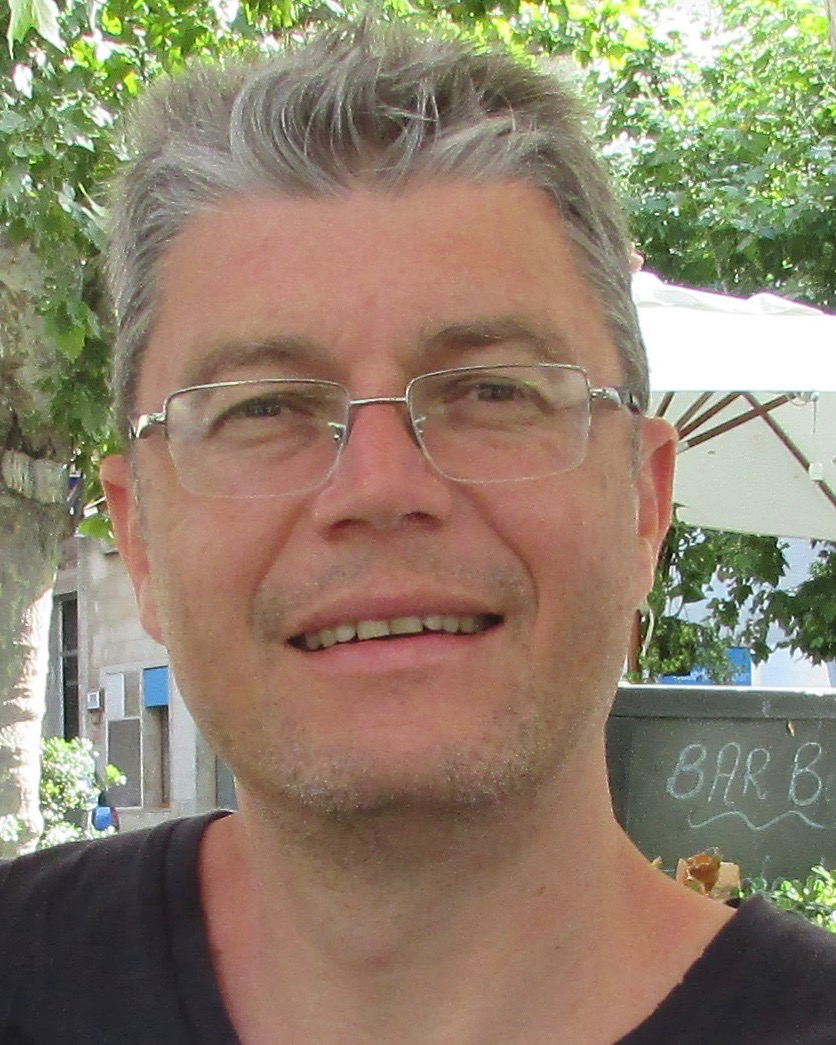}}]{Ignacio Santamaria} (M'96-SM'05) 
received the Telecommunication Engineer degree and the Ph.D. degree in electrical engineering from the Universidad Politecnica de Madrid (UPM), Spain, in 1991 and 1995, respectively. In 1992, he joined the Department of Communications Engineering, Universidad de Cantabria, Spain, where he is Full Professor since 2007. He has co-authored more than 250 publications in refereed journals and international conference papers, and holds two patents. He has co-authored the book D. Ramirez, I. Santamaria, and L.L. Scharf, ``Coherence in Signal Processing and Machine Learning'', Springer, 2022. His current research interests include signal processing algorithms and information-theoretic aspects of multiuser multiantenna wireless communication systems, multivariate statistical techniques and machine learning theories. He has been involved in numerous national and international research projects on these topics.  He has been a visiting researcher at the University of Florida (in 2000 and 2004), at the University of Texas at Austin (in 2009), and at the Colorado State University (in 2015 and 2017). He has been Associate Editor of the IEEE Transactions on Signal Processing (2011-2015), and Senior Area Editor of the IEEE Transactions on Signal Processing (2013-2015). He has been a member of the IEEE Machine Learning for Signal Processing Technical Committee (2009-2014), member of the IEEE Signal Processing Theory and Methods Technical Committee (2020-2022), and member of the IEEE Data Science Initiative (DSI) steering committee (2020-2022). Prof. Santamaria co-authored a paper that received the 2012 IEEE Signal Processing Society Young Author Best Paper Award, and has received the 2022 IHP International Wolfgang Mehr Fellowship Award.
\end{IEEEbiography} 

\begin{IEEEbiography}[{\includegraphics[width=1in,height=1.25in,clip,keepaspectratio]{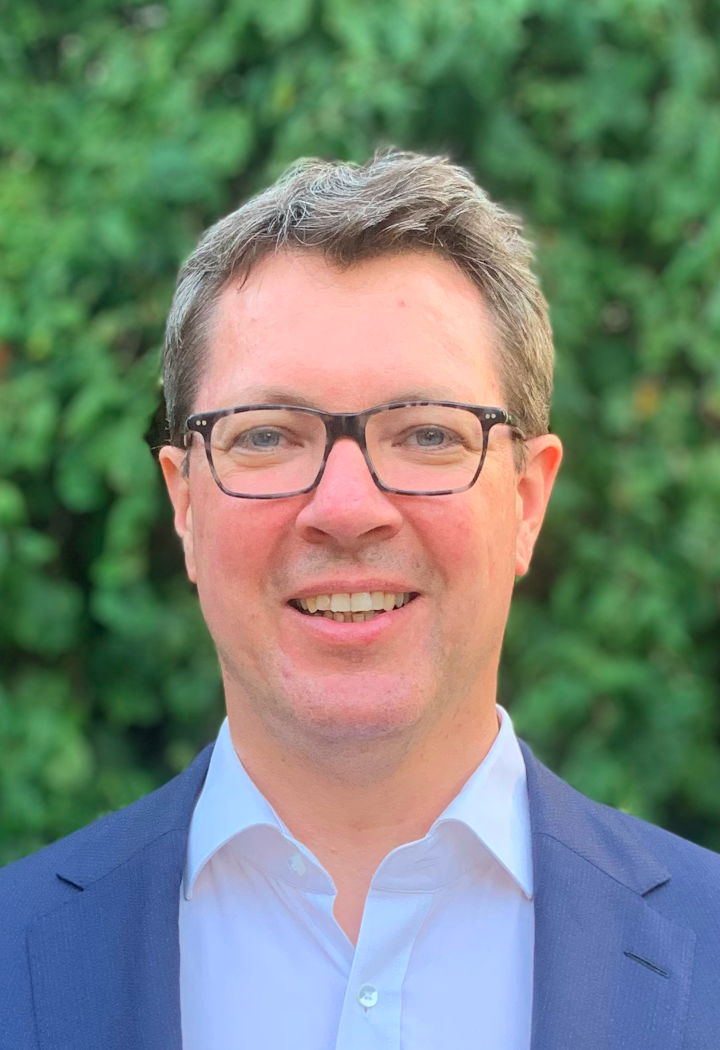}}]
{Eduard A. Jorswieck} (Fellow, IEEE) received the Ph.D. degree in computer engineering from TU Berlin, in 2004. From 2006 to 2008, he was a Postdoctoral Fellow and an Assistant Professor with the Signal Processing Group, KTH Stockholm. From 2008 to 2019, he was the Chair for Communication Theory with TU Dresden. He is currently the Managing Director of the Institute of Communications Technology and the Head of the Chair for Communications Systems and a Full Professor with Technische  Universit\"at  Braunschweig, Germany. His research interest lies in the broad area of communications. He has published more than 170 journal articles, 15 book chapters, one book, three monographs, and more than 320 conference papers. He is a Fellow of the IEEE. He was a recipient of the IEEE Signal Processing Society Best Paper Award. Since 2017, he has been the Editor-in-Chief of the Springer EURASIP Journal on Wireless Communications and Networking. Since 2022, he has been on the editorial board of the IEEE TRANSACTIONS ON COMMUNICATIONS. He was on the editorial boards of the IEEE SIGNAL PROCESSING LETTERS, the IEEE TRANSACTIONS ON SIGNAL PROCESSING, the IEEE TRANSACTIONS ON WIRELESS COMMUNICATIONS, and the IEEE TRANSACTIONS ON INFORMATION FORENSICS AND SECURITY.
\end{IEEEbiography}

\begin{IEEEbiography}[{\includegraphics[width=1in,height=1.25in,clip,keepaspectratio]{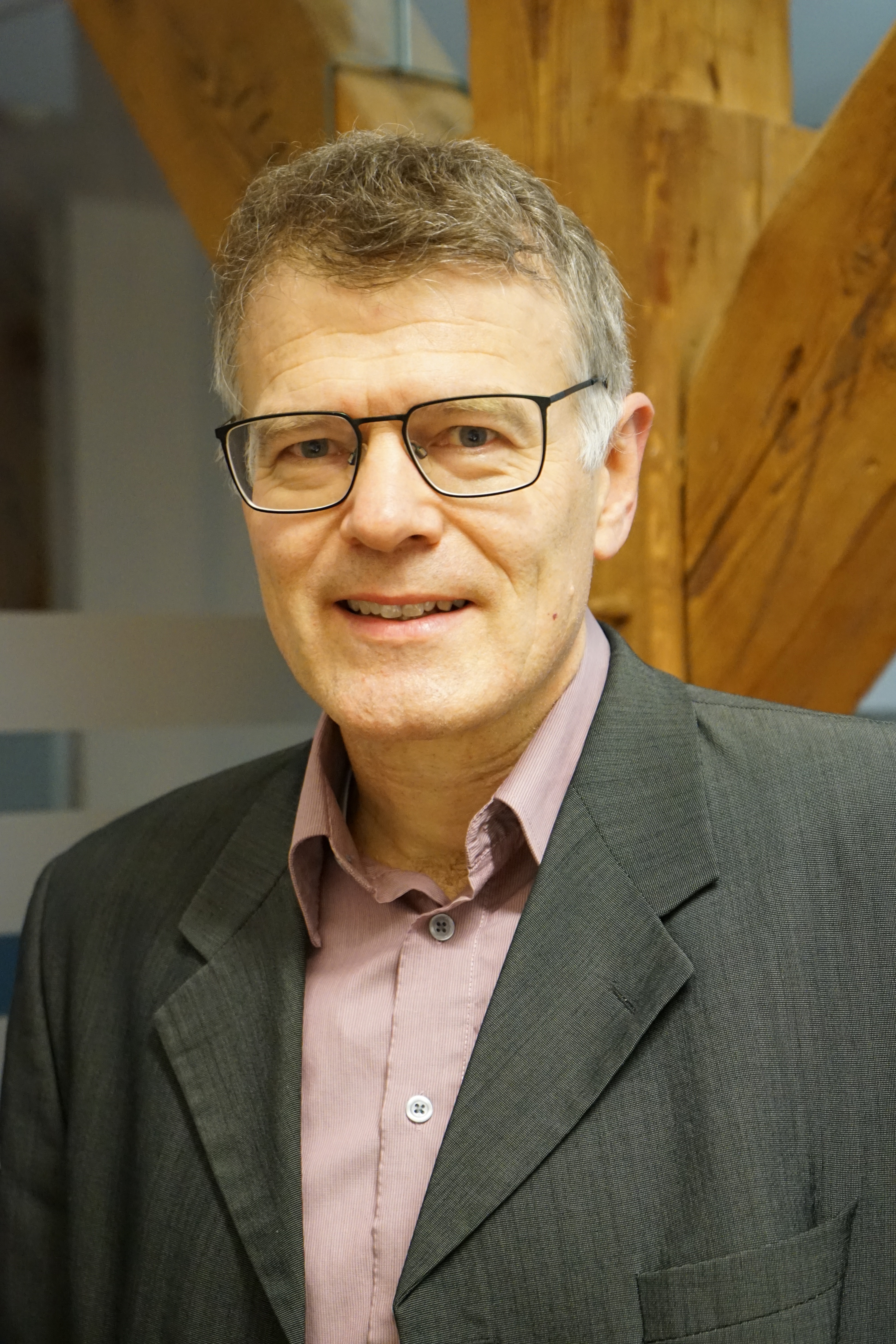}}]
{Robert Schober} (S'98, M'01, SM'08, F'10) received the Diplom (Univ.) and the Ph.D. degrees in electrical engineering from Friedrich-Alexander University of Erlangen-Nuremberg (FAU), Germany, in 1997 and 2000, respectively. From 2002 to 2011, he was a Professor and Canada Research Chair at the University of British Columbia (UBC), Vancouver, Canada. Since January 2012 he is an Alexander von Humboldt Professor and the Chair for Digital Communication at FAU. His research interests fall into the broad areas of Communication Theory, Wireless and Molecular Communications, and Statistical Signal Processing.

Robert received several awards for his work including the 2002 Heinz Maier­ Leibnitz Award of the German Science Foundation (DFG), the 2004 Innovations Award of the Vodafone Foundation for Research in Mobile Communications, a 2006 UBC Killam Research Prize, a 2007 Wilhelm Friedrich Bessel Research Award of the Alexander von Humboldt Foundation, the 2008 Charles McDowell Award for Excellence in Research from UBC, a 2011 Alexander von Humboldt Professorship, a 2012 NSERC E.W.R. Stacie Fellowship, a 2017 Wireless Communications Recognition Award by the IEEE Wireless Communications Technical Committee, and the 2022 IEEE Vehicular Technology Society Stuart F. Meyer Memorial Award. Furthermore, he received numerous Best Paper Awards for his work including the 2022 ComSoc Stephen O. Rice Prize and the 2023 ComSoc Leonard G. Abraham Prize. Since 2017, he has been listed as a Highly Cited Researcher by the Web of Science. Robert is a Fellow of the Canadian Academy of Engineering, a Fellow of the Engineering Institute of Canada, and a Member of the German National Academy of Science and Engineering.

He served as Editor-in-Chief of the IEEE Transactions on Communications, VP Publications of the IEEE Communication Society (ComSoc), ComSoc Member at Large, and ComSoc Treasurer. Currently, he serves as Senior Editor of the Proceedings of the IEEE and as ComSoc President.

\end{IEEEbiography} 

\begin{IEEEbiography}[{\includegraphics[width=1in,height=1.25in,clip,keepaspectratio]{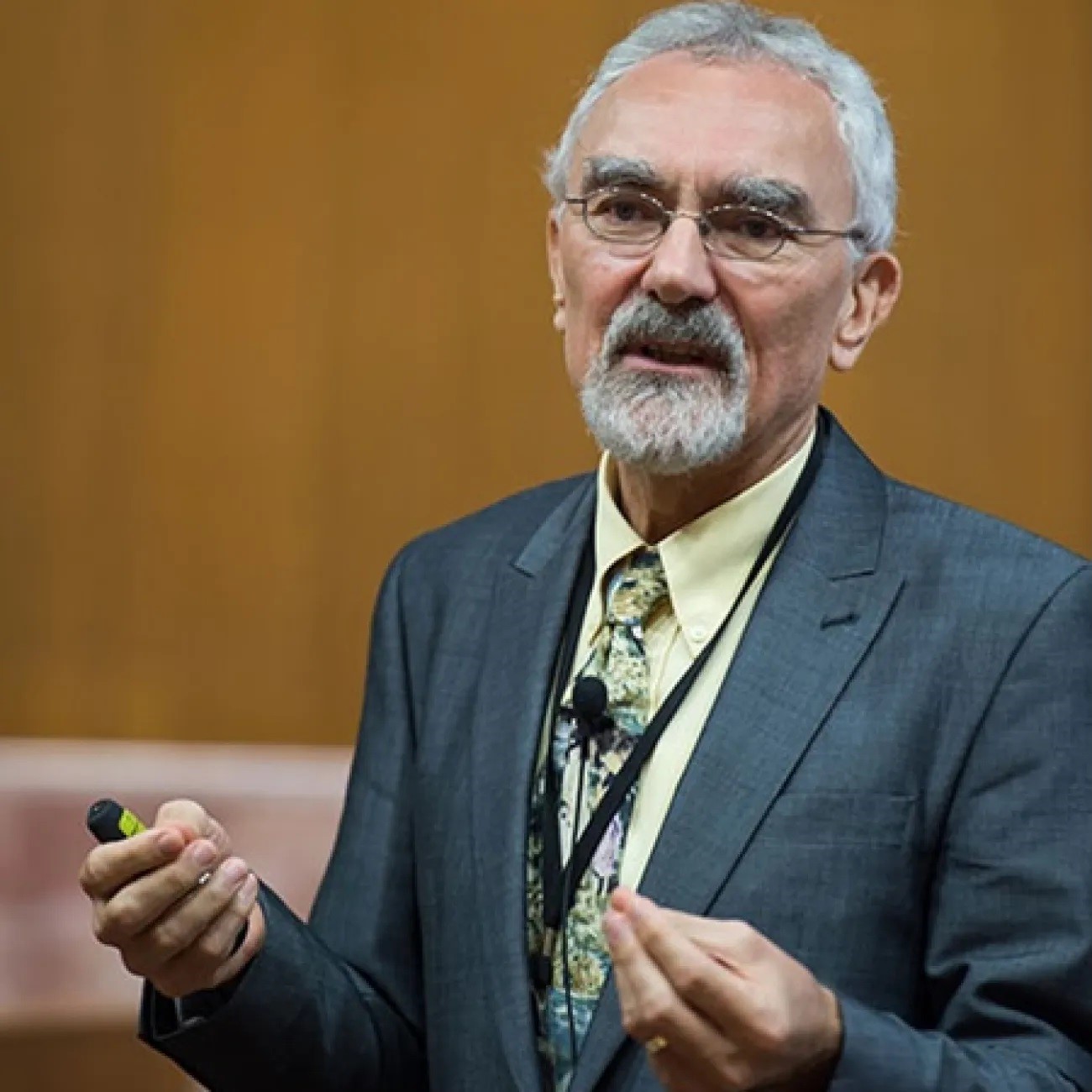}}]
{Lajos Hanzo}  is a Fellow of the Royal Academy of Engineering, FIEEE, FIET, Fellow of EURASIP and a Foreign Member of the Hungarian Academy of Sciences. He coauthored 2000+ contributions at IEEE Xplore and 19 Wiley-IEEE Press monographs. He was bestowed upon the IEEE Eric Sumner Technical Field Award.

\end{IEEEbiography} 

\end{document}